\definecolor{DarkRed}{rgb}{0.5,0.1,0.1}
\definecolor{DarkBlue}{rgb}{0.1,0.1,0.5}
\def\BState{\State\hskip-\ALG@thistlm}
\newtheorem{theorem}{Theorem}
\newtheorem{lemma}{Lemma}[section]
\newtheorem{proposition}[lemma]{Proposition}
\newtheorem{claim}[lemma]{Claim}
\newtheorem{definition}{Definition}
\newtheorem{problem}{Problem}
\newtheorem{remark}[lemma]{Remark}
\newtheorem*{claim*}{Claim}
\newtheorem*{proposition*}{Proposition}
\newtheorem*{lemma*}{Lemma}
\newtheorem*{problem*}{Problem}
\newtheorem{mdresult}[theorem]{Theorem}
\newenvironment{Theorem}{\begin{mdframed}[backgroundcolor=lightgray!40,topline=false,rightline=false,leftline=false,bottomline=false,innertopmargin=2pt]\begin{mdresult}}{\end{mdresult}\end{mdframed}}
\newtheorem{mdinvariant}{Invariant}
\newenvironment{invariant}{\begin{mdframed}[hidealllines=false,backgroundcolor=gray!10,innertopmargin=0pt]\begin{mdinvariant}}{\end{mdinvariant}\end{mdframed}}
\newtheorem{question}{Question}
\renewcommand{\qed}{\nobreak \ifvmode \relax \else
      \ifdim\lastskip<1.5em \hskip-\lastskip
      \hskip1.5em plus0em minus0.5em \fi \nobreak
      \vrule height0.75em width0.5em depth0.25em\fi}
\newcommand{\toShrink}{-.20cm}
\newcommand{\toShrinkEnu}{-.2cm}
\newcommand{\Ot}{\ensuremath{\widetilde{O}}}
\newcommand{\eps}{\ensuremath{\varepsilon}}
\newcommand{\Paren}[1]{\Big(#1\Big)}
\newcommand{\bracket}[1]{\left[#1\right]}
\newcommand{\paren}[1]{\ensuremath{\left(#1\right)}\xspace}
\newcommand{\card}[1]{\left\vert{#1}\right\vert}
\newcommand{\set}[1]{\ensuremath{\left\{ #1 \right\}}}
\newcommand{\polylog}{\mbox{\rm  polylog}}
\DeclareMathOperator*{\Exp}{\ensuremath{{\mathbb{E}}}}
\DeclareMathOperator*{\Prob}{\ensuremath{\textnormal{Pr}}}
\renewcommand{\Pr}{\Prob}
\newcommand{\Ex}{\Exp}
\newcommand{\etal}{{\it et al.\,}}
\newcommand{\event}[1]{\ensuremath{{\sf E}_{#1}}}
\newenvironment{tbox}{\begin{tcolorbox}[
		enlarge top by=5pt,
		enlarge bottom by=5pt,
		 breakable,
		 boxsep=0pt,
                  left=4pt,
                  right=4pt,
                  top=10pt,
                  arc=0pt,
                  boxrule=1pt,toprule=1pt,
                  colback=white
                  ]
	}
{\end{tcolorbox}}
\renewcommand{\event}{\mathcal{E}}
\renewcommand{\deg}[1]{\ensuremath{\textnormal{\textsf{deg}}_{#1}}}
\renewcommand{\event}[1]{\mathcal{E}(#1)}
\renewcommand{\event}[1]{\ensuremath{\mathcal{E}\paren{#1}}}
\renewcommand{\event}{\ensuremath{\mathcal{E}}\xspace}
\newcommand{\mis}{\ensuremath{\mathcal{M}}}
\title{Fully Dynamic Maximal Independent Set with \\ 
Sublinear in $n$ Update Time}
\author{Sepehr Assadi\footnote{{\small \texttt{sassadi@cis.upenn.edu}.} Supported in part by NSF grant CCF-1617851} \\ University of Pennsylvania
\and Krzysztof Onak\footnote{{\small \texttt{konak@us.ibm.com}.}} \\ IBM Research
\and Baruch Schieber\footnote{{\small \texttt{sbar@us.ibm.com}.}}\\ IBM Research
\and Shay Solomon\footnote{{\small \texttt{solo.shay@gmail.com}.}} \\ IBM Research
}
\date{}
\begin{document}
\maketitle

\thispagestyle{empty}
\begin{abstract}
The first fully dynamic algorithm for maintaining a maximal independent set (MIS) with update time that is \emph{sublinear in the number of edges} was presented recently by the authors of this paper [Assadi \etal STOC'18].
The algorithm is deterministic and its update time is $O(m^{3/4})$, where $m$ is the (dynamically changing) number of edges.
Subsequently, Gupta and Khan and independently Du and Zhang [arXiv, April 2018] presented  deterministic algorithms for dynamic MIS with update times of $O(m^{2/3})$ and
$O(m^{2/3} \sqrt{\log m})$, respectively.
Du and Zhang also gave a randomized algorithm  with update time $\Ot(\sqrt{m})$\footnote{Here and throughout the paper, we use $\Ot$-notation to suppress logarithmic factors, i.e., $\Ot(f) := O(f) \cdot \polylog{(f)}.$}. Moreover, they provided some partial (conditional) hardness results hinting that update time of $m^{1/2-\eps}$, and in particular $n^{1-\eps}$ for $n$-vertex dense graphs, is a natural barrier for this
problem for any constant $\eps >0$, for both deterministic and randomized algorithms that satisfy a certain natural property.

\medskip

In this paper, we break this natural barrier and present the first fully dynamic (randomized) algorithm for maintaining an MIS with update time
that is always \emph{sublinear in the number of vertices}, namely, an $\Ot(\sqrt{n})$ expected amortized update time algorithm. We also show that a simpler variant of our algorithm can already achieve an $\Ot(m^{1/3})$ expected amortized update time, which results in an improved
performance over our $\Ot(\sqrt{n})$ update time algorithm for sufficiently sparse graphs, and breaks the
$m^{1/2}$ barrier of Du and Zhang for all values of $m$.

\end{abstract}

\setcounter{page}{0}
\clearpage
\section{Introduction}

The maximal independent set (MIS) problem is of utmost practical and theoretical importance, primarily since MIS algorithms provide a useful subroutine for locally breaking symmetry between multiple choices.
MIS is often used in the context of graph coloring, as all vertices in an independent set can be assigned the same color.
As another example, Hopcroft and Karp~\cite{HopcroftKarp} gave an algorithm to compute a large bipartite matching (approximating the maximum matching to within a factor arbitrarily close to 1)
by finding maximal independent sets of longer and longer augmenting paths.
In general, the MIS problem has natural connections to various important combinatorial optimization problems; see the celebrated papers of Luby~\cite{Luby86} and Linial~\cite{Linial87} for some of the most basic applications of MIS.
Additional applications of MIS include leader election~\cite{DaumGKN12}, resource allocation~\cite{YuWHL14},  network backbone constructions~\cite{KuhnMW04,JurdzinskiK12}, and sublinear-time approximation algorithms~\cite{NguyenO08}.

The MIS problem has been extensively studied in parallel and distributed settings, following the seminal works of \cite{Luby86,AlonBI86,Linial87}. Surprisingly however, the fundamental problem of \emph{maintaining} an MIS in dynamic graphs received no attention in the literature until the pioneering PODC'16 paper of Censor-Hillel, Haramaty, and Karnin \cite{CHK16}, who developed a \emph{randomized} algorithm for this problem under the oblivious adversarial model\footnote{In the standard \emph{oblivious adversarial model} (cf.\ \cite{CW77b}, \cite{KKM13}), the adversary knows all the edges in the graph and their arrival order, as well as the algorithm to be used,
but is not aware of the random bits used by the algorithm, and so cannot choose updates adaptively in response to the randomly guided choices of the algorithm.}
in \emph{distributed} dynamic networks.
Implementing the distributed algorithm of \cite{CHK16} in the sequential setting requires $\Omega(\Delta)$ update time in \emph{expectation}, where $\Delta$ is a fixed upper bound on the maximum degree in the graph, which may be $\Theta(m)$ in sparse graphs.
Furthermore, it is unclear whether $O(\Delta)$ time is also sufficient for this algorithm, and a naive implementation may incur an update time of $\Theta(m)$, even in expectation, where $m$ is the (dynamically changing) number of edges; see Section~6 of \cite{CHK16} for further details.

We study the MIS problem in (sequential) \emph{dynamic setting}, where the underlying graph evolves over time via edge updates.
A \emph{dynamic graph} is a graph sequence ${\cal G} = (G_0,G_1,\dots,G_M)$ on $n$ fixed vertices,
where the initial graph  is $G_0 = (V,\emptyset)$ and each graph $G_i = (V,E_i)$ is obtained from the previous graph $G_{i-1}$ in the sequence by either adding or deleting a single edge. The work of Censor-Hillel \etal \cite{CHK16} left the following question open:
Can one dynamically maintain an MIS in time significantly lower than it takes to recompute it from scratch following every edge update?

The authors of this paper~\cite{AOSS18} answered this question in the affirmative, presenting the first fully dynamic algorithm for maintaining an MIS with (amortized) update time that is sublinear in the number of edges, namely, $O(\min\{m^{3/4},\Delta\})$. Achieving an update time of $O(\Delta)$ is simple, and the main contribution of~\cite{AOSS18} is in further reducing the update time to $O(m^{3/4})$.
Note that  $O(m^{3/4})$ improves over the simple $O(\Delta) = O(n)$ bound only for sufficiently sparse graphs.

Onak \etal \cite{OSSW18} studied ``uniformly sparse'' graphs, as opposed to the   work by Assadi \etal \cite{AOSS18} that focused on unrestricted sparse graphs.
The ``uniform sparsity'' of the graph is often measured by its \emph{arboricity}~\cite{NashW61,NashW64,Tutte61}:
The arboricity $\alpha$ of a graph $G=(V,E)$ is defined as $\alpha=\max_{U\subset V} \lceil\frac{|E(U)|}{|U|-1}\rceil$, where $E(U)=\left\{(u,v)\in E\mid u,v\in U\right\}$.
A dynamic graph of \emph{arboricity} $\alpha$ is a dynamic graph such that all graphs $G_i$ have arboricity bounded by $\alpha$.
Onak \etal \cite{OSSW18} showed that
for any dynamic $n$-vertex graph of arboricity $\alpha$, an MIS can be maintained with amortized update time $O(\alpha^2\log^2 n)$, which reduces to $O(\log^2 n)$
in bounded arboricity graphs, such as planar graphs and more generally all minor-closed graph classes.
The result of \cite{OSSW18} improves that of \cite{AOSS18}  for all graphs with arboricity bounded by $m^{3/8 - \eps}$, for any constant $\eps > 0$.
Since the arboricity of a general graph cannot exceed $\sqrt{m}$,
this result covers much of the range of possible values for arboricity.
Nonetheless, for general graphs, this update time of $O(\alpha^2\log^2 n)$ is in fact higher than the naive $O(m)$ time needed to compute an MIS from scratch.

Recently, the $O(m^{3/4})$ bound of Assadi \etal \cite{AOSS18} for general graphs was improved to $O(m^{2/3})$ by Gupta and Khan \cite{GuptaK18} and independently to $O(m^{2/3} \sqrt{\log m})$ by Du and Zhang \cite{DZ18}.
All the aforementioned algorithms (besides the distributed algorithm of \cite{CHK16}) are deterministic. Du and Zhang also presented a randomized algorithm under the oblivious adversarial model with an \emph{expected} update time of $O(\sqrt{m} \log^{1.5} m)$;
for dense graphs, this update time reduces to $O(n \log^{1.5} n)$ which is worse than the simple $O(\Delta) = O(n)$ deterministic update time algorithm for this problem.

None of the known algorithms for dynamically maintaining an MIS achieves an update time of $o(n)$ in \emph{dense graphs}.
A recent result of Du and Zhang~\cite{DZ18} partially addresses this lack of progress: they presented an ``imperfect reduction'' from the \emph{Online Boolean Matrix-Vector Multiplication} problem to prove a conditional hardness result for the dynamic MIS problem (see,
e.g.~\cite{HKNS15} for the role of this problem in proving conditional hardness result for dynamic problems).
This result hints that the update time of $m^{1/2-\eps}$ or $n^{1-\eps}$ for any constant $\eps > 0$ maybe of a natural barrier for a large class of deterministic and randomized
algorithms for dynamic MIS that satisfy a certain natural property (see~\cite{DZ18} for exact definition of this property and more details).

This state-of-affairs, namely, the lack of progress on obtaining the update time of $o(n)$ for dynamic MIS in general on one hand, and the partial hardness result hinting that (essentially) $\Omega(n)$ update time might be
a natural barrier for this problem for a large class (but not all) of algorithms on the other hand, raises the following fundamental question:

\begin{question} \label{q1}
Can one maintain a maximal independent set in a dynamically changing graph with update time that is \emph{always} $o(n)$?
\end{question}

\subsection{Our contribution}

Our main result is a \emph{positive} resolution of Question~\ref{q1} in a strong sense:

\begin{Theorem}\label{thm1}
Starting from an empty graph on $n$ fixed vertices, an MIS can be maintained over any sequence of edge insertions and deletions in $\Ot(\min\{m^{1/3},\sqrt{n}\})$
amortized update time, where $m$ denotes the dynamic number of edges, and the update time bound holds both in expectation and with high probability%
\footnote{We remark that the high probability
 guarantee holds when the number of updates is sufficiently large; see the formal statement of the results in later sections.}.
\end{Theorem}
The proof of Theorem \ref{thm1} is carried out in three stages. In the first stage we provide a simple randomized algorithm for maintaining an MIS with update time $\Ot(n^{2/3})$;
although we view this as a ``warmup'' result, it already resolves Question \ref{q1}.
In the second stage we generalize this simple algorithm to obtain an update time of $\Ot(m^{1/3})$.
Achieving the $\Ot(\sqrt{n})$ bound is more intricate; we reach this goal by carefully building on the ideas from the $\Ot(n^{2/3})$ and $\Ot(m^{1/3})$-time algorithms.

Finding a maximal independent set is one of the most studied problems in distributed computing. It is thus important to provide an efficient distributed implementation of the proposed sequential dynamic algorithms. While the underlying distributed network is subject to topological updates (particularly edge updates) as in the sequential setting,
the goal in the distributed setting is quite different: Optimizing the (amortized) \emph{round complexity}, \emph{adjustment complexity} and \emph{message complexity} of the distributed algorithm (see, e.g.~\cite{CHK16,AOSS18} for definitions).
Achieving low amortized round and adjustment complexities is typically rather simple, and so the goal is to devise a distributed algorithm whose amortized message complexity matches the update time of the proposed sequential algorithm. This goal was achieved by \cite{AOSS18} and \cite{GuptaK18}. Similarly to \cite{AOSS18, GuptaK18}, our sequential algorithm can also be distributed, achieving an \emph{expected} amortized message complexity of $\Ot(\min\{m^{1/3},\sqrt{n}\})$, in addition to $O(1)$ amortized round and adjustment complexities,  per each update. We omit the details of the distributed implementation of our algorithm as it follows more or less in a straightforward way from our sequential algorithm using the ideas in~\cite{AOSS18}.

\newcommand{\greedy}{\ensuremath{\textnormal{\textsf{Greedy}}}\xspace}

\renewcommand{\deg}[2]{\ensuremath{d_{#1}(#2)}}

\section{Preliminaries}\label{sec:prelim}

\paragraph{Notation.} For a graph $G(V,E)$, $n$ denotes the number of vertices in $V(G) := V$ and $m$ denotes the number of edges in $E(G) := E$. For set $S \subseteq V$, we define $G[S]$ as the \emph{induced} subgraph
of $G$ on vertices in $S$. We further define $N_G(S)$ to be the set of vertices that are neighbor to at least one vertex in $S$ in $G$ (we may drop the subscript $G$ when it is clear from the context). For a vertex $v \in V$,
we define $\deg{G}{v}$ as the degree of $v$ in $G$. Finally, $\Delta(G)$ denotes the maximum degree in $G$.

\paragraph{Greedy MIS.} Maximal independent set problem admits a sequential greedy algorithm.  Let $G(V,E)$ be a graph and $\pi := \pi(V)$ be any ordering of vertices in $G$. $\greedy(G,\pi)$ iterates over vertices in $G$
according to the ordering $\pi$ and adds each vertex to the MIS iff none of its neighbors have already been chosen. It is immediate to verify that this algorithm indeed computes an MIS of $G$ for any ordering $\pi$. Throughout this
paper, we always assume that $\pi$ is the \emph{\underline{lexicographically-first ordering}} of vertices and hence simply write $\greedy(G)$ instead of $\greedy(G,\pi)$.

\subsection{A Deterministic $O(\Delta)$-Update Time Algorithm}\label{sec:det-algorithm}

We use the following simple algorithm for maintaining an MIS deterministically: every vertex maintains a \emph{counter} of number of its neighbors in the MIS, and after any update to the graph,
decides whether it should join or leave the MIS based on this information. Moreover, any vertex that joins or leaves the MIS use $O(\Delta)$ time to update the counter of its neighbors.
While the worst case update time of this algorithm can be quite large for some updates, one can easily prove that \emph{on average}, only $O(\Delta)$ time is needed to process each update,
as was first shown in~\cite{AOSS18} and further strengthened in~\cite{GuptaK18}.
\begin{lemma}[\!\!\cite{AOSS18,GuptaK18}]\label{lem:dynamic-delta}
	Starting from any graph $G(V,E)$, a maximal independent set can be maintained \emph{deterministically} over any sequence of $K$ vertex or edge insertions and deletions in $O(m + K \cdot \Delta)$ time
	where $\Delta$ is a fixed bound on the maximum degree in the graph and $m = \card{E}$ is the original number of edges in $G$.
\end{lemma}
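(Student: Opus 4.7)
The plan is to formalize the counter-based algorithm sketched above and bound its total running time by directly accounting for the edge operations and by a potential-function argument for the changes in MIS membership.

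The data structure would store the graph $G$ via adjacency lists, the current MIS $M$, and for every $v \in V$ a counter $c(v) := |N_G(v) \cap M|$. The algorithm enforces the standard MIS characterization
\[
  v \in M \iff c(v) = 0
\]
after every update. Initialization runs $\greedy$ on $G$ and populates all counters by a single scan of the edges, in $O(m)$ time. On an edge insertion $(u,v)$: if exactly one endpoint (say $u$) is in $M$, set $c(v) \gets c(v)+1$; if both $u,v \in M$, evict one (say $v$), decrement $c(w)$ for every $w \in N_G(v)$, then scan $N_G(v)$ once and add to $M$ each $w \notin M$ whose counter has dropped to $0$, incrementing $w$'s neighbors' counters in turn. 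Edge deletions are handled symmetrically. A short case analysis verifies that the invariant---and hence the maximality of $M$---is restored at the end of each update.

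For the running time I would partition the work into three buckets: (a) $O(m)$ for initialization; (b) $O(1)$ per update for adjacency-list bookkeeping, contributing $O(K)$ overall; and (c) an $O(\Delta)$ charge every time a vertex joins or leaves $M$, since touching the counters of its at most $\Delta$ neighbors dominates that step. Buckets (a) and (b) already yield the $O(m+K)$ part of the bound, so the task reduces to bounding the total number of join/leave events over the $K$ updates by $O(K)$. The leave count is easy: a leave is triggered only by an edge insertion whose two endpoints are currently in $M$, giving at most one leave per update and hence at most $K$ leaves overall. Direct joins that are triggered when an edge deletion drops some counter to $0$ are analogously bounded by $K$.

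The main obstacle, and the heart of the analysis, is the \emph{cascade} of joins that can follow a single leave: when $v$ is evicted, every $w \in N_G(v)$ with $c(w)=0$ must enter $M$, so one leave may produce $\Theta(\Delta)$ simultaneous joins and a naive per-update bound of $\Theta(\Delta^2)$. I would discharge this with a potential function of the form $\Phi = \alpha\Delta\cdot(n - |M|)$ for a sufficiently large constant $\alpha$: each cascading join decreases $n-|M|$ by one, and the resulting $\alpha\Delta$ drop in $\Phi$ pays for the join's $O(\Delta)$ work, while a leave's $O(\Delta)$ cost is paid against the $\alpha\Delta$ increase in $\Phi$ it causes. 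This yields amortized $O(\Delta)$ per update; the boundary contribution $\Phi_0 - \Phi_K \le \alpha\Delta\cdot(|M_{\textnormal{final}}|-|M_{\textnormal{initial}}|)$ is folded into the $O(m)$ initialization term by noting that $\greedy$ already spends $\Omega(\deg(v))$ time for every vertex it places in the initial $M$. Summing the three buckets gives the claimed total running time of $O(m + K\cdot\Delta)$.
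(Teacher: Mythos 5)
The paper itself does not prove this lemma---it is imported from~\cite{AOSS18,GuptaK18} with only a one-paragraph sketch---so you are reconstructing a proof rather than matching one. Your algorithmic design (counter invariant $v \in M \iff c(v)=0$, evict-then-cascade update rule, $O(\Delta)$ work per membership change) is the standard and correct one, and an amortized argument is indeed what is needed. The gap is in the specific potential function.

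With $\Phi = \alpha\Delta(n-|M|)$, the boundary contribution is $\Phi_0 - \Phi_K = \alpha\Delta(|M_K| - |M_0|)$, and you claim this folds into the $O(m)$ initialization cost. That claim is false in general: all you can say is that every vertex outside $M_0$ has a neighbor in $M_0$, giving $n-|M_0| \le m$, so the boundary is only bounded by $\alpha\Delta m$, which is a factor $\Delta$ too large. Concretely, in a star $K_{1,n-1}$ with the center placed in $M_0$, then $K=0$ updates, your bound has a slack term of $\Theta(\Delta n)=\Theta(n^2)$ while $m+K\Delta = \Theta(n)$. (The \emph{actual} work is small here---the bound is not tight---but that just shows the potential cannot be used to establish the lemma.) The observation that \greedy spends $\Omega(\deg(v))$ on each $v\in M_0$ bounds quantities involving $|M_0|$ and the initial degree sum, not $\Delta\cdot(|M_K|-|M_0|)$.

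The repair is to make the potential degree-sensitive rather than uniformly weighting every non-MIS vertex by $\Delta$. Take
\[
\Phi \;=\; \alpha\!\!\sum_{v\notin M}\!\deg_{G_t}(v)\qquad\text{(equivalently, up to an $O(K)$ shift, } \alpha\bigl(n\Delta - \textstyle\sum_v c(v)\bigr)\text{).}
\]
Now a join of $w$ decreases $\Phi$ by exactly $\alpha\deg(w)$, which dominates the join's $O(\deg(w))$ actual cost, so cascading joins still have nonpositive amortized cost; a leave of $v$ increases $\Phi$ by $\alpha\deg(v) \le \alpha\Delta$, matching its $O(\Delta)$ cost; each edge insertion/deletion shifts $\Phi$ by $O(\alpha)$. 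Crucially, the boundary term is now genuinely controllable: $\Phi_0 - \Phi_K \le \Phi_0 \le \alpha\sum_v \deg_{G_0}(v) = 2\alpha m$, which is absorbed by the $O(m)$ initialization. This yields the claimed $O(m + K\Delta)$ total. One minor additional omission: the lemma allows \emph{vertex} insertions and deletions as well, which your write-up does not treat; these are routine (an inserted isolated vertex joins $M$ in $O(1)$, a deleted vertex is processed like a leave if it was in $M$), and the same potential argument covers them.
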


\subsection{Sample-and-Prune Technique for Computing an MIS}\label{sec:sample-and-prune}

We also use a simple application of the \emph{sample-and-prune} technique of~\cite{KMVV13} (see also~\cite{LMSV11}) originally introduced in context of streaming and MapReduce algorithms. To our knowledge, the following lemma
was first proved in~\cite{K18} following an approach in~\cite{ACGMW15}. Intuitively speaking, it
asserts that if we sample each vertex of the graph with probability $p$, compute an MIS of the sampled graph, and remove all vertices that are incident to this MIS, the degree of remaining vertices would be $O(p^{-1} \cdot \log{n})$.
For completeness, we present a self-contained proof of this lemma here (we note that our formulation is somewhat different from that of~\cite{K18} and is tailored to our application).

\begin{lemma}[cf.~\cite{K18,ACGMW15}]\label{lem:filtering}
	Fix any $n$-vertex graph $G(V,E)$ and a parameter $p \in [0,1)$. Let $S$ be a collection of vertices chosen by picking each vertex in $V$ independently and with probability $p$.
	Suppose $\mis := \greedy(G[S])$ and $U := V \setminus \paren{\mis \cup N_G(\mis)}$.  Then, with probability $1-1/n^{4}$,
	\[
	\Delta(G[U]) \leq 5p^{-1} \cdot \ln{n}.
	\]
\end{lemma}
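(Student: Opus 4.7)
The plan is to fix an arbitrary $v \in V$ and show that
\[
\Pr\!\bracket{v \in U \text{ and } |N_G(v) \cap U| > T} \leq n^{-5},
\]
where $T := 5 p^{-1}\ln n$; a union bound over the $n$ choices of $v$ then yields the lemma.

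The key construction I would use is a \emph{modified} MIS process that decouples the random bits inside the closed neighborhood $N[v] := \set{v} \cup N_G(v)$ from those outside. Let $\mis^{\star} := \greedy(G[S \setminus N[v]])$, the MIS obtained by running greedy after forcibly removing $N[v]$ from the sample. Then $\mis^{\star}$ depends only on the sample bits of $V \setminus N[v]$, and is therefore independent of whether each $u \in N[v]$ is sampled. The central observation I would establish is the equivalence $v \in U \iff \mis = \mis^{\star}$. The forward direction holds because $v \in U$ means $N[v] \cap \mis = \emptyset$, and since only vertices of $\mis$ can block subsequent greedy decisions, removing the non-MIS vertices of $N[v]$ from $S$ leaves $\mis$ intact, so $\mis = \mis^{\star}$. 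The backward direction is immediate since $\mis^{\star}$ has no vertex in $N[v]$ by construction.

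Next I would define two subsets of $N_G(v)$, both measurable with respect to $\mis^{\star}$: $A := \set{u \in N_G(v) : u \text{ has no earlier neighbor in } \mis^{\star}}$ (the ``free'' neighbors of $v$ in the modified process), and $B := \set{u \in N_G(v) : u \text{ has no neighbor in } \mis^{\star}} \subseteq A$. Using the equivalence above I would argue two things: first, $\set{v \in U} = \set{v \notin S} \cap \bigcap_{u \in A} \set{u \notin S}$, because if some $u \in A$ were sampled then (with $\mis = \mis^{\star}$ and $u$ free) $u$ would join $\mis$ and force $v \in N_G(\mis)$, contradicting $v \in U$; second, under $v \in U$, each $u \in N_G(v) \cap U$ satisfies $u \notin S$ and has $u \in B$ (since $\mis = \mis^{\star}$ has no neighbor of $u$), and because $u \notin S$ is forced for every $u \in A \supseteq B$, one obtains $|N_G(v) \cap U| = |B|$ under $v \in U$.

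To finish, I would condition on $\mis^{\star}$, which fixes $A$ and $B$. The sample bits for $N[v]$ are independent of $\mis^{\star}$, so
\[
\Pr\!\paren{v \in U \text{ and } |N_G(v) \cap U| > T \mid \mis^{\star}} = (1-p)^{|A|+1}\cdot \mathbf{1}\bracket{|B| > T} \leq (1-p)^{T+2} \leq e^{-p(T+2)} \leq n^{-5},
\]
using $|A| \geq |B| \geq T+1$ on the indicator's support and the choice of $T$. Averaging over $\mis^{\star}$ and taking a union bound over $v$ gives the stated bound. The main technical subtlety will be establishing the equivalence $v \in U \iff \mis = \mis^{\star}$ rigorously (an induction over the lexicographic ordering of greedy); once this decoupling is in place, the probability calculation separates cleanly because $A$ and $B$ live on the ``outside'' randomness, which is independent of the coin flips driving the event $\set{v \in U}$.
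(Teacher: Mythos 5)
Your proof is correct, and it takes a genuinely different route from the paper's. The paper runs $\greedy(G[S])$ as a sequential deferred-decisions process: it scans vertices in lexicographic order, revealing the sample bit of a neighbor $v_i$ of $u$ only when $v_i$ is processed and is not yet blocked by the partial MIS, so that each such candidate is sampled with conditional probability exactly $p$; accumulating $\tau$ such candidates gives the $(1-p)^{\tau}$ bound. You instead replace this sequential conditioning with an explicit static decoupling: $\mis^{\star} := \greedy(G[S \setminus N[v]])$ is a deterministic function of the sample bits outside the closed neighborhood $N[v]$, hence independent of the bits inside $N[v]$. Once the equivalence $v \in U \iff \mis = \mis^{\star}$ is established (and its forward direction, that deleting vertices of $S \setminus \mis$ from $S$ leaves $\greedy(G[S])$ unchanged, does require the lexicographic induction you flag), your sets $A$ and $B$ are fixed by $\mis^{\star}$, and the bad event becomes a conjunction of genuinely independent Bernoulli coins, bounded by the single unconditional product $(1-p)^{|A|+1}$ on the indicator's support rather than a telescoping product of conditional probabilities. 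The tradeoff: the paper's version is shorter because it never states the decoupling explicitly (leaving the conditional independence along the greedy scan somewhat informal), while yours makes the independence airtight at the cost of proving the equivalence $v \in U \iff \mis = \mis^{\star}$. Both are sound and yield the same exponent $n^{-5}$ per vertex before the union bound.
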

\begin{proof}
	Define $\tau := 5p^{-1}\cdot \ln{n}$ and fix any vertex $u$ in the original graph $G$. We prove that with high probability either $u \notin U$ or $\deg{G[U]}{u} \leq \tau$ and then take a union bound on all vertices to conclude the proof.

	We note that the process of computing $\greedy(G[S])$ can be seen as iterating over vertices of $V$ in a lexicographically-first order and skip the vertex if it is incident on $\mis$ (computed so far) and otherwise pick it with probability $p$
	and include it in $\mis$. Let $v_1,\ldots,v_{\deg{G}{u}}$ be the neighbors of $u$ in $G$ ordered accordingly. When processing the vertex $v_i$, if $v_i$ is not already incident on $\mis$ computed so far, the probability
	that we pick $v_i$ to join $\mis$ is exactly $p$. As such, if we encounter at least $\tau$ such vertices in this process, the probability that we do not pick any of them is at most:
	\begin{align*}
		\prod_{i=j}^{\tau} \Pr\paren{\text{$v_{i_j}$ is not chosen $\mid$ $v_{i_j}$ is not incident to the MIS}} = (1-p)^{\tau} \leq \exp\paren{p \cdot 5p^{-1} \cdot \ln{n}} = \frac{1}{n^5}.
	\end{align*}
	As such, we either did not encounter $\tau$ vertices not incident to $\mis$, which implies that $\deg{G[U]}{u} \leq \tau$, or we did, which implies that with probability $1-1/n^5$, $u$ itself is neighbor to some vertex in $\mis$ (as by calculation above, we would pick one of those at least $\tau$ vertices) and hence does not belong to $U$.
	Taking a union bound on all $n$ vertices now finalizes the proof.
\end{proof}

\newcommand{\nei}[1]{\ensuremath{\textnormal{\texttt{neighbors}}[#1]}\xspace}
\newcommand{\mishnei}[1]{\ensuremath{\textnormal{\texttt{MIS-H-neighbors}}[#1]}\xspace}
\newcommand{\lnei}[1]{\ensuremath{\textnormal{\texttt{L-neighbors}}[#1]}\xspace}
\newcommand{\status}[1]{\ensuremath{\textnormal{\texttt{status}}[#1]}\xspace}

\newcommand{\preprocess}{\ensuremath{\textnormal{\textsf{PreProcess}}}\xspace}
\newcommand{\update}{\ensuremath{\textnormal{\textsf{Update}}}\xspace}

\newcommand{\ts}{\ensuremath{t_{start}}}
\renewcommand{\th}{\ensuremath{t_{H}}}
\newcommand{\ti}{\ensuremath{t_{I}}}
\newcommand{\tl}{\ensuremath{t_{L}}}
\newcommand{\td}{\ensuremath{t_{\Delta}}}
\newcommand{\tE}{\ensuremath{t_{edge}}}
\newcommand{\te}{\ensuremath{t_{end}}}

\newcommand{\Deltastar}{\ensuremath{\Delta^{\star}}}

\section{Warmup: An $\Ot(n^{2/3})$-Update Time Algorithm}\label{sec:dynamic-n2/3}

We shall start with a simpler version of our algorithm as a warm-up.

\begin{theorem}\label{thm:dynamic-n2/3}
	Starting from an empty graph on $n$ vertices, a maximal independent set can be maintained via a randomized algorithm over any sequence of $K$ edge insertions and deletions in $O(K \cdot \paren{n\cdot \log{n}}^{2/3})$ time in expectation and
	$O(K \cdot \paren{n\cdot \log{n}}^{2/3} + n^{4/3}\log{n})$ time with high probability\footnote{This in particular implies that when the length of update sequence is $\Omega(n^{2/3})$, the amortized update time is with high probability $O\paren{(n\cdot\log{n})^{2/3})}$.}.
\end{theorem}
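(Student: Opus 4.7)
The plan is to use an epoch-based scheme that couples the sample-and-prune lemma with the deterministic dynamic algorithm. Fix a sampling probability $p = \Theta(n^{-1/3})$ (up to polylogarithmic factors) and an epoch length $T = \Theta(n^{2/3})$. At the start of each epoch we sample $S\subseteq V$ by including each vertex independently with probability $p$, compute $M := \greedy(G[S])$ using $O(1)$ adjacency queries in $\Ot(|S|^2) = \Ot(n^{4/3})$ time, identify $U := V\setminus (M\cup N_G(M))$, and instantiate the deterministic algorithm of Lemma~\ref{lem:dynamic-delta} on $G[U]$ with degree bound $\Delta^\star = 5 p^{-1}\log n$. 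By Lemma~\ref{lem:filtering}, $\Delta(G[U])\le \Delta^\star$ with probability at least $1 - n^{-4}$, so the deterministic algorithm's hypotheses hold. The maintained MIS is taken to be $M$ together with the MIS of $G[U]$ produced by the deterministic algorithm (plus a small set of ``extras'' introduced below). The dominant per-epoch initialization cost is $\Ot(n^{4/3})$, covering the $\greedy$ computation and the setup of the deterministic data structure on $G[U]$; amortized over $T$ updates this contributes $\Ot(n^{2/3})$ per update and also accounts for the additive $\Ot(n^{4/3}\log n)$ term in the theorem's high-probability bound.

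Within each epoch, every edge update $(u,v)$ is processed according to the current roles of $u$ and $v$. The straightforward cases---edges with both endpoints in $U$, or an insertion between an $M$-vertex and a $U$-vertex---are forwarded to the deterministic algorithm at cost $\Ot(\Delta^\star) = \Ot(n^{1/3})$. The two delicate cases are (a) insertions whose both endpoints lie in $M$, violating $M$'s independence, and (b) deletions of $(u,v)$ with $u\in M$ in which $u$ was $v$'s \emph{last} $M$-neighbor, so that $v$ transitions out of $N_G(M)$. For case (a) we end the epoch early and rebuild; by the oblivious-adversary assumption, for any fixed adversarial edge $\Pr[u,v\in M]\le p^2$, so in expectation at most $Tp^2 = O(1)$ such rebuilds occur per epoch, and their amortized cost is absorbed into the $\Ot(n^{2/3})$ per-update bound. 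For case (b) we maintain a counter $c(v) = |N_G(v)\cap M|$ for each $v$ (initialized and updated locally as $M$-incident edges change); when $c(v)$ hits zero we scan $v$'s $G$-neighborhood at cost $O(\deg_G(v))\le O(n)$ to check whether $v$ has any remaining MIS-neighbor, and if not we add $v$ to the output MIS as an ``extra.'' Since $\Pr[u\in M]\le p$ under the oblivious-adversary model, the expected per-update cost of these scans is at most $O(p\cdot n) = \Ot(n^{2/3})$, matching the target.

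The main obstacle I anticipate is establishing correctness---that the output MIS is at every moment a valid and maximal independent set---while vertices migrate between the classes $M$, $N_G(M)$, and $U$ and while the addition of extras can trigger cascading reclassifications. Specifically, when an extra $v$ is added, its previously-undominated neighbors become dominated, and when an $M$-vertex is evicted in case (a), several of its neighbors may simultaneously lose their last $M$-neighbor, requiring a batch of promotions. I plan to control these chains by maintaining, alongside $c(v)$, a second counter $d(v) = |N_G(v)\cap \mathrm{MIS}|$, so that each local event can be processed in time proportional to the affected vertex's $G$-degree. Once correctness is established, summing the per-update costs (deterministic-algorithm operations $\Ot(n^{1/3})$, promotion scans $\Ot(n^{2/3})$, and amortized rebuilds $\Ot(n^{2/3})$) yields the claimed $O(K\cdot (n\log n)^{2/3})$ expected total time. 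The high-probability bound follows from a Chernoff-type concentration on the count of case (b) events across all $K$ updates, together with a union bound over the $O(K/T)$ epochs for the sampling event in Lemma~\ref{lem:filtering}; the additive $\Ot(n^{4/3}\log n)$ term absorbs both the one-time setup cost and the concentration deviations, provided $K$ is large enough for the concentration to kick in, as noted in the theorem's footnote.
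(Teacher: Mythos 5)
Your overall architecture---phases with a sample-and-prune preprocessing step followed by the deterministic $O(\Delta)$ algorithm of Lemma~\ref{lem:dynamic-delta} on the residual graph---does match the paper's. But your parameter choice $p = \Theta(n^{-1/3})$, $T = \Theta(n^{2/3})$ rests on the claim that per-epoch initialization costs $\Ot(n^{4/3})$, and that claim is wrong. Identifying $U$ and constructing an adjacency representation of $G[U]$ (which Lemma~\ref{lem:dynamic-delta} needs before it can start) forces you to examine every edge incident to $U$: this is $\Theta\bigl(\sum_{u\in U} d_G(u)\bigr) = \Theta(m)$, which is $\Theta(n^2)$ on dense graphs, not $\Ot(n^{4/3})$. (Adjacency-matrix access does not rescue you: merely enumerating the edges of $G[U]$ takes $\Omega(|U|^2)$ when $|U|=\Theta(n)$.) The paper's Lemma~\ref{lem:preprocess-runtime} accordingly charges $O(m+n)$ for preprocessing, and this is precisely why it sets $p = (\log n)^{1/3}/n^{2/3}$ and $T = \Theta(1/p^2) = \Theta(n^{4/3}/\log^{2/3} n)$: then the $O(n^2)$ rebuild amortizes to $\Ot(n^{2/3})$ per update. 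With your $T = n^{2/3}$ the same $O(n^2)$ rebuild amortizes to $\Theta(n^{4/3})$ per update, which already exceeds the target before any other costs are counted.

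The second gap is your ``extras'' mechanism for case (b). Adding a vertex $v$ that loses its last $M$-neighbor directly to the output MIS, without inserting $v$ together with its incident edges into the graph $G[U]$ that Lemma~\ref{lem:dynamic-delta} maintains, breaks independence: the deterministic algorithm is unaware of $v$ and may later promote a neighbor $w\in U\cap N(v)$ into its local MIS, producing two adjacent MIS vertices. The counters $c(v),d(v)$ do not repair this, because the deterministic algorithm's decisions are driven by its own internal view of $G[U]$. The paper instead performs a genuine \emph{vertex insertion} of $v$ into $G_t[L_t]$ (Case 2-a of its update algorithm), letting Lemma~\ref{lem:dynamic-delta} decide $v$'s status coherently; this is also where the $O(n)$ scan to find $v$'s $L_t$-neighbors enters, and why the paper caps the number of such moves via the $\ti$ stopping rule. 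Relatedly, your early termination fires only when both endpoints of the update lie in $M$, whereas the paper terminates whenever both endpoints lie in the full sample $H\supseteq M$. That stronger rule is what keeps $G_t[H]=G_{\ts}[H]$, hence $\mis_H = \greedy(G_t[H])$, throughout the phase---the exact hypothesis under which Lemma~\ref{lem:filtering} can be re-applied at every intermediate time $t$ to keep $\Delta(G_t[L_t])$ bounded. With your weaker criterion (and without the $\ti,\tl$ safety nets), a deletion of an edge between $M$ and $H\setminus M$ can make $\mis_H\neq\greedy(G_t[H])$ without ending the phase, and then the claimed $O(\Delta^\star)$ update cost inside $G[U]$ is no longer guaranteed.
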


The algorithm in Theorem~\ref{thm:dynamic-n2/3} works in \emph{phases}. Each phase starts with a \emph{preprocessing step} in which we initiate the data structure for the algorithm and in particular
compute a partial MIS of the underlying graph with some useful properties (to be specified later).
Next, during each phase, we have the \emph{update step} which processes the updates to the graph until a certain condition (to be defined later) is met, upon which we terminate this phase
and start the next one. We now introduce each step of our algorithm during one phase.

\subsection*{The Preprocessing Step}

The goal in this step is to find a partial MIS of the current graph with the following (informal) properties: $(i)$ it should be ``hard'' for a \emph{non-adaptive oblivious adversary} to ``touch'' vertices of this independent set,
and $(ii)$ maintaining an MIS in the reminder of the graph, i.e., after excluding these vertices and their neighbors from consideration, should be distinctly ``easier''.

In the following, we prove that the sample-and-prune technique introduced in Section~\ref{sec:prelim} can be used to achieve this task (we will pick an exact value for $p$ below later but approximately $p \approx n^{-2/3}$):

\begin{tbox}
	$\preprocess(G,p)$:
	\begin{enumerate}
		\item Let $H$ be a set chosen by picking each vertex in $V(G)$ with probability $p$ independently.
		\item Compute $\mis_H := \greedy(G[H])$.
		\item Return $(H,\mis_H)$.
	\end{enumerate}
\end{tbox}

Throughout this section, we use $\ts$ to denote the time step in which $\preprocess(G,p)$ is computed (hence $G = G_{\ts}$).
We define a partitioning of the vertices of $G_t$ at any time $t \geq \ts$:

\begin{itemize}
	\item $H$: the set of vertices computed by $\preprocess(G_{\ts},p)$ (and \emph{not} $G_t$).
	\item $I_t := N_{G_t}(\mis_H) \setminus H$: the set of vertices incident on $\mis_H$ in the graph $G_t$ that are not in $H$. 
	\item $L_t := V \setminus (H \cup I_t)$: the set of vertices \emph{not} in $H$ neither incident to $\mis_H$ in the graph $G_t$. 
\end{itemize}
\noindent
It is easy to see that in any time $t \geq \ts$, $(H,I_t,L_t)$ partitions the vertices of the graph.
We emphasize that definition of $H$ is with respect to the time step $\ts$ and graph $G_{\ts}$, while $I_t$ and $L_t$ are defined for the graph $G_t$ for $t \geq \ts$. This means that across time steps $t \geq \ts$,
the set of vertices $H$ is fixed but remaining vertices may move between $I_t$ and $L_t$. We use this partitioning  to define the following key time steps in the execution of the algorithm:
\begin{itemize}
	\item $\th \geq \ts$: the {first} time step $t$ in which $G_t[H] \neq G_{\ts}[H]$ (recall that $H$ and $\mis_H$ were computed with respect to $G_{\ts}$ and not $G_t$).
	\item $\ti \geq \ts$: the first time step $t$ in which the total number of times (since $\ts$) that vertices have moved from $I_s$ to $L_{s+1}$,
	for $s<t$, reaches $2p^{-1}$. 
	\item $\tl \geq \ts$: the {first} time step $t$ in which $\Delta(G_{t}[L_t]) > 5p^{-1} \cdot \ln{n}$.
	\item $\te := \min\set{\th,\ti,\tl,\ts+T}$ where $T:= \frac{1}{6p^2}$: the time step in which we terminate this phase (in other words, if any of the conditions above happen, the phase finishes and the next phase starts).
\end{itemize}
\noindent
By definition above, each phase starts at time step $\ts$ and ends at time step $\te$ and has length at most $T = \frac{1}{6p^2}$. We say that a phase is \emph{successful} iff $\te = \ts + T$.

In the following, we prove that every phase is successful with at least a constant probability (this fact will be used later to argue that the cost of preprocessing steps can be amortized over the large number of
updates between them).

\begin{lemma}\label{lem:is-successful}
	Any given phase is successful, i.e., has $\te = \ts + T$, with probability at least $1/2$.
\end{lemma}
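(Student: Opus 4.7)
The plan is to union-bound the failure probability of a phase over its three possible triggers, $\th < \ts+T$, $\ti < \ts+T$, and $\tl < \ts+T$, and to show that each occurs with probability at most roughly $1/6$. The crucial property I use throughout is the obliviousness of the adversary: the sequence of $T$ edge updates following $\ts$ is fixed independently of the random sample $H$ defining $\mis_H$, which is what lets me quantify each failure probability cleanly using independent $p$-sampling.

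The first two events are handled by elementary counting. For the event $\th < \ts+T$: each of the $T$ updates modifies $G[H]$ only when both endpoints of the updated edge lie in $H$, which by independent $p$-sampling happens with probability $p^2$; a union bound gives $\Pr[\th < \ts+T] \leq Tp^2 = 1/6$. For the event $\ti < \ts+T$: the key observation is that every transition of a vertex $v$ from $I$ to $L$ is necessarily triggered by the deletion of $v$'s last remaining edge to $\mis_H$, and this edge has an endpoint in $\mis_H \subseteq H$. Hence the total number of such transitions is dominated by the number of updates in the phase that have an endpoint in $H$, whose expectation is at most $2Tp = 1/(3p)$; Markov's inequality then gives $\Pr[\ti < \ts+T] \leq 1/6$.

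The third event $\tl < \ts+T$ is the delicate one, since Lemma~\ref{lem:filtering} only gives a degree bound at the instant its sample is drawn and $\mis_H$ is pinned to the time-$\ts$ graph, while we need a degree bound on every graph $G_t$ in the phase. My plan is to bound $\Pr[\tl < \ts+T]$ conditionally on $\th \geq \ts+T$, whose complement was already charged above. On this event no update touches $G[H]$, so for every intermediate $t \in [\ts,\ts+T)$ we have $\greedy(G_t[H]) = \mis_H$; therefore $L_t \subseteq V \setminus (\mis_H \cup N_{G_t}(\mis_H))$, which is exactly the ``residual'' set $U$ produced by Lemma~\ref{lem:filtering} applied to the fixed graph $G_t$ with sample $H$. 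Since $H$ is independent of the oblivious adversary, Lemma~\ref{lem:filtering} gives $\Delta(G_t[L_t]) \leq 5p^{-1}\ln n$ with probability $\geq 1 - 1/n^4$; union-bounding over the $T$ intermediate times adds at most $T/n^4 = o(1)$ to the failure probability. Summing all three contributions yields $\Pr[\text{phase unsuccessful}] \leq 1/6 + 1/6 + o(1) < 1/2$, as desired.

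The main obstacle I expect is precisely the interaction between the static choice of $H$ (and hence of $\mis_H$) and the evolving graph in this third step: one cannot just apply Lemma~\ref{lem:filtering} once to $G_{\ts}$ and hope the bound transfers to later $G_t$, so it is essential to first filter out the bad event $\th < \ts+T$ in order to identify $\mis_H$ with the instantaneous greedy MIS of $G_t[H]$, after which Lemma~\ref{lem:filtering} can be invoked directly on each $G_t$ using obliviousness.
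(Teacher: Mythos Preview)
Your proposal is correct and follows essentially the same approach as the paper: the paper also bounds the three failure events $\th < \ts + T$, $\ti < \ts + T$, and $\tl < \ts + T$ separately by $1/6$, $1/6$, and $o(1)$ (the last one conditioned on $\th \geq \ts+T$), using exactly the arguments you outline. The only cosmetic difference is that the paper union-bounds the third event over the at most $n^2$ distinct intermediate graphs rather than over the $T$ time steps, yielding $1/n^2$ instead of your $T/n^4$; both are $o(1)$ and the proofs are otherwise identical.
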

\begin{proof}
The lemma is proved in the following three claims which bound $\th$, $\ti$, and $\tl$, respectively. All claims crucially use the fact the adversary is non-adaptive and oblivious and hence we can fix its updates beforehand.

\begin{claim}\label{clm:t_H}
	$\Pr\paren{\th < \ts + T} \leq \frac{1}{6}$.
\end{claim}
\begin{proof}
	For any $t \geq \ts$, let $e_t := (u_t,v_t)$ denote the
	edge updated by the adversary at time $t$. We consider the randomness in $\preprocess(G_{\ts},p)$. The probability that both $u_t$ and $v_t$ belong to $H$ is exactly $p^2$. For any $t \in [\ts,\ts + T)$,
	define an indicator random variable $X_t$ which is $1$ iff $(u_t,v_t)$ belongs to $G[H]$. Let $X := \sum_{t} X_t$. In order for $G_t[H]$ to no longer be equal to $G_{\ts}[H]$ for some $t \in [\ts,\ts+T)$, at least one of these $T-1$ updates needs to have both
	endpoints in $H$. As such,
	\begin{align*}
		\Pr\paren{\th < \ts + T} \leq \Pr\paren{X \geq 1} \leq \Ex\bracket{X} = (T-1) \cdot {p^2} \leq \frac{1}{6p^2} \cdot p^2 = \frac{1}{6},
	\end{align*}
	where the second inequality is by Markov bound.
\end{proof}

\begin{claim}\label{clm:t_I}
	$\Pr\paren{\ti < \ts + T} \leq \frac{1}{6}$.
\end{claim}
\begin{proof}
	For any $t \geq \ts$, let $e_t := (u_t,v_t)$ denote the edge updated by the adversary at time $t$. By the randomness in $\preprocess(G_{\ts},p)$, the probability that at least one endpoint of $e_t$ belong to $H$ is $2p-p^2 \leq 2p$. For any $t \in [\ts,\ts + T)$, define an indicator random variable $Y_t$ which is $1$ iff at least one of $u_t$ or $v_t$ belong to $H$. Let $Y := \sum_{t} Y_t$.

	The only way a vertex from $I$ moves to $L$ is that an edge incident on this vertex with other endpoint in $\mis_H$ is deleted (and this vertex has no other edge to $\mis_H$ either). For this to happen $2p^{-1}$ times (as in definition of $\ti$),
	we need to have at least $2p^{-1}$ updates in the range $[\ts,\ts+T)$ with at least one endpoint in $H$ (recall that $\mis_H \subseteq H$). As such,
	\begin{align*}
		\Pr\paren{\ti < \ts + T} \leq \Pr\paren{Y \geq 2p^{-1}} \leq \Ex\bracket{Y} \cdot \frac{p}{2} \leq (T-1) \cdot {2p} \cdot \frac{p}{2} \leq \frac{1}{6p^2} \cdot p^2 = \frac{1}{6},
	\end{align*}
	where the second inequality is by Markov bound.
\end{proof}

\begin{claim}\label{clm:t_L}
	$\Pr\paren{\tl < \ts + T \mid \th \geq \ts + T} \leq \frac{1}{n^2}$.
\end{claim}
\begin{proof}
	Fix the graphs $G_t$ for $t \in [\ts,\ts + T)$. Recall that $H$ is a subset of vertices of $G_t$ each chosen independently with probability $p$. Moreover, since $\th \geq \ts + T$ and hence $G_t[H] = G_{\ts}[H]$, we know that
	$\mis_H$ is indeed equal to $\greedy(G_t[H])$ (in addition to $\greedy(G_{\ts}[H])$). As such, by Lemma~\ref{lem:filtering},
	with choice of $S = H$ and $U_t = L_t$, for any graph $G_t$, with probability $1-1/n^4$, we have that $\Delta(G_{t}[L_{t}]) \leq 5p^{-1}\ln{n}$. Taking a union bound on these $\leq n^2$ graphs finalizes the proof.
\end{proof}

By applying union bound to Claims~\ref{clm:t_H},~\ref{clm:t_I}, and~\ref{clm:t_L}, the probability that $\te = \min\set{\th,\ti,\tl} < \ts + T$ is at most $1/6 + 1/6 + 1/n^2 \leq 1/2$, finalizing the proof of Lemma~\ref{lem:is-successful}.
\end{proof}

We conclude this section with the following straightforward lemma.

\begin{lemma}\label{lem:preprocess-runtime}
	$\preprocess(G,p)$ takes $O(m + n)$ time where $m := \card{E(G)}$.
\end{lemma}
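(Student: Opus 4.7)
The plan is to simply bound the time needed by each of the three lines of $\preprocess(G,p)$ and sum them up.

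For Step 1, sampling $H$ requires flipping an independent biased coin (with bias $p$) for each of the $n$ vertices of $V(G)$, and recording the outcome in a membership array; this takes $O(n)$ time. Step 3 (returning the pair $(H,\mis_H)$) is $O(1)$ assuming we only return pointers to the two sets. All the real work is in Step 2.

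For Step 2, I would compute $\greedy(G[H])$ by iterating over the vertices of $V(G)$ in lexicographically-first order while maintaining a Boolean flag $\status{v}$ that indicates whether $v$ has been added to $\mis_H$. For each vertex $v$, first check in $O(1)$ time (using the membership array built for $H$) whether $v \in H$; if not, skip $v$. Otherwise, scan the adjacency list of $v$ in $G$, and check whether some neighbor $u \in H$ already satisfies $\status{u} = \textnormal{true}$. If no such $u$ exists, set $\status{v} \leftarrow \textnormal{true}$ and append $v$ to $\mis_H$. The work done for a given $v \in H$ is $O(1 + \deg_G(v))$, and for $v \notin H$ it is $O(1)$, so summing over all $v \in V(G)$ the total cost is
\[
O\!\paren{n + \sum_{v \in V(G)} \deg_G(v)} = O(n + m).
\]
Note that we do not need to explicitly materialize the induced subgraph $G[H]$: filtering neighbors against the membership array for $H$ on the fly suffices, and any neighbor of $v$ outside $H$ is simply ignored.

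There is essentially no technical obstacle here: the bound follows from the standard linear-time implementation of the greedy MIS procedure together with the obvious $O(n)$ cost of the sampling step. Combining the three steps yields a total runtime of $O(n + m)$, as claimed.
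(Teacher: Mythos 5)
Your proof is correct, and it is precisely the standard linear-time implementation that the paper implicitly has in mind: the paper states this lemma without proof, calling it ``straightforward,'' so your argument simply fills in the omitted (and unremarkable) details. One minor note on notation: the paper's \verb|\deg| macro takes two arguments as $\deg{G}{v}$ rather than the usual subscript form, but this is purely cosmetic and does not affect the correctness of your bound $O\paren{n + \sum_{v} \deg{G}{v}} = O(n+m)$.
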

\subsection*{The Update Algorithm}

We now describe the update process during each phase. As argued before, each phase continues between time steps $\ts$ and $\te$ where the latter is smaller than or equal to time steps $\th,\ti$ and $\tl$.
As such, by definition of these time steps, we have the following invariant.

\begin{invariant}\label{inv:preprocess}
	At any time step $t \in [\ts,\te)$ inside one phase:
	\begin{enumerate}[label=(\roman*)]
	\item $\mis_H$ is an MIS of the graph $G_t[H]$,
	\item $\Delta(G_t[L_t]) = O(p^{-1} \cdot \log{n})$.
	\end{enumerate}
	Moreover, throughout the phase, at most $O(p^{-1})$ vertices are moved from $I$ to $L$.
\end{invariant}

We note that the first property above is simply because $G_t[H] = G_{\ts}[H]$ as $t < \th$ and hence $\mis_H$ is also an MIS of $G_t[H]$. The second property is by definition of $\tl$ and the last one is by definition of $\ti$.

Our update algorithm simply maintains the graph $G_t[L_t]$ at all time and run the basic deterministic algorithm in Lemma~\ref{lem:dynamic-delta} on $G_t[L_t]$ to maintain an MIS $\mis_{L_t}$ of $G_t[L_t]$. The full MIS maintained
by the dynamic algorithm is then $\mis_H \cup \mis_{L_t}$.


We now describe the update algorithm in more details. For any vertex $v \in V$, we maintain whether it currently belongs to $H$, $I_t$, or $L_t$. Additionally, for any vertex in $I_t \cup H$, we maintain a list of its neighbors in $\mis_H$.
Finally, we also maintain the graph $G_t[L_t]$, which involves storing, for each vertex
$v \in L_t$, the set of all of its neighbors in $L_t$. Note that both edges and vertices (as opposed to only edges) may be inserted to or deleted from $G_t[L_t]$ by the algorithm (and as such, we crucially use the fact that the algorithm in
Lemma~\ref{lem:dynamic-delta} can process vertex-updates as well). Fix a time $t \in [\ts,\te]$ and let $e_t = (u_t,v_t)$ be the updated edge. We consider the following cases:

\begin{itemize}[leftmargin=10pt]
	\item \textbf{Case 1.} Updates that cannot impact the partitioning $(H,I_t,L_t)$ of vertices:
	\begin{itemize}
	\item \textbf{Case 1-a.} \emph{Both $u_t$ and $v_t$ belong to $H$}. This update means that $t = \th$ as the graph $G_t[H]$ is updated
	and hence this update concludes this phase (and is processed in the next phase).

	\item \textbf{Case 1-b.} \emph{Both $u_t$ and $v_t$ belong to $I_{t-1}$}. There is nothing to do in this case.

	\item \textbf{Case 1-c.} \emph{Both $u_t$ and $v_t$ belong to $L_{t-1}$}. We need to update the edge $(u_t,v_t)$ in
	the graph $G_t[L_t]$ and pass this edge-update to the algorithm in Lemma~\ref{lem:dynamic-delta} on $G_t[L_t]$.

	\item \textbf{Case 1-d.} \emph{$u_t$ belongs to $I_{t-1}$ and $v_t$ belongs to $L_{t-1}$ (or vice versa)}. There is nothing to do in this case.
	\end{itemize}

	\item \textbf{Case 2.} Updates that can (potentially) change the partitioning $(H,I_t,L_t)$ of vertices:
	\begin{itemize}
	\item \textbf{Case 2-a.} \emph{$u_t$ is in $H$ and $v_t$ is in $I_{t-1}$ (or vice versa)}. If $e_t$ is inserted, the partitioning $(H,I_t,L_t)$ remains the same and there is nothing to do except for updating the list of neighbors of
	$v_t$ in $\mis_H$. However, if $e_t$ is deleted, it might be that $v_t$ needs to be removed from $I_t$ and inserted to $L_t$ instead (if it is no longer incident on $\mis_H$).
	If so, we iterate over all neighbors of $v_t$ and find the ones which are in $L_t$.
	We then insert $v_t$ with all its incident edges to $G_t[L_t]$ and pass this vertex-update to the algorithm in Lemma~\ref{lem:dynamic-delta} on $G_t[L_t]$.

	\item \textbf{Case 2-b.} \emph{$u_t$ is in $H$ and $v_t$ is in $L_{t-1}$ (or vice versa)}. If $e_t$ is deleted, the partitioning $(H,I_t,L_t)$ remains the same and there is nothing to do. However, if $e_t$ is inserted,
	it might be that $v_t$ needs to leave $L_t$ and join $I_t$ (if $u_t$ belongs to $\mis_H$).
	If so, we delete $v_t$ with all its incident edges in $L_t$ from $G_t[L_t]$ and run the algorithm in Lemma~\ref{lem:dynamic-delta} to process this vertex-update in $G_t[L_t]$.

	\end{itemize}
\end{itemize}

\noindent
The cases above cover all possible updates. By the correctness of the deterministic algorithm in Lemma~\ref{lem:dynamic-delta}, $\mis_{L_t}$ is a valid MIS of $G_t[L_t]$. Since all vertices in $I_t$ are incident to some vertex in $\mis_H$, it is
immediate to verify that $\mis_H \cup \mis_{L_t}$ is an MIS of the graph $G$ for any time step $t \in [\ts,\te]$ by Invariant~\ref{inv:preprocess}. It only remains to analyze the running time of the update algorithm.

\begin{lemma}\label{lem:update-time}
	Let $K$ denote the number of updates in a particular phase. The update algorithm maintains an MIS of the graph in this phase in $O(n^2 + p^{-1} \cdot n +  K \cdot p^{-1}\cdot\log{n})$ time.
\end{lemma}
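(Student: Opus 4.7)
The plan is to account separately for the one-time preprocessing and initialization at the start of the phase, the per-update cost for the cheap updates, and the expensive $I$-to-$L$ vertex moves; I would then invoke Invariant~\ref{inv:preprocess} to bound the number of expensive events.

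For the initialization at time $\ts$, by Lemma~\ref{lem:preprocess-runtime} computing $(H,\mis_H)$ takes $O(m+n)$ time. A single sweep over the edges of $G_{\ts}$ then suffices to determine the partition $(H,I_{\ts},L_{\ts})$, to populate the ``neighbors in $\mis_H$'' list for every vertex of $H\cup I_{\ts}$, and to build the adjacency lists of $G_{\ts}[L_{\ts}]$, for another $O(m+n)$. Since $m \le \binom{n}{2}$, this is $O(n^2)$, giving the first term of the claim.

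For the per-update work in $[\ts,\te]$, I would split updates into three groups. Cases 1-a, 1-b, 1-d, together with the ``no partition change'' sub-cases of 2-a (insertion) and 2-b (deletion), each take $O(1)$ time for a status lookup and at most one update to a ``neighbors in $\mis_H$'' list. Case 1-c forwards the edge update to the deterministic maintainer of Lemma~\ref{lem:dynamic-delta} on $G_t[L_t]$, and Case 2-b with $v_t$ moving from $L$ to $I$ forwards a vertex-deletion to the same maintainer, which is cheap because $v_t$'s neighbors in $L_t$ are already stored in the adjacency lists of $G_t[L_t]$. By Invariant~\ref{inv:preprocess}(ii) we have $\Delta(G_t[L_t]) = O(p^{-1}\log n)$ throughout the phase, so each of these updates contributes $O(p^{-1}\log n)$ amortized by Lemma~\ref{lem:dynamic-delta}, for a total of $O(K\cdot p^{-1}\log n)$ over the phase. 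The ``initial edges'' overhead $O(|E(G_{\ts}[L_{\ts}])|) = O(n\cdot p^{-1}\log n)$ from Lemma~\ref{lem:dynamic-delta} is absorbed into the $O(n^2)$ initialization term.

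The only remaining expensive event is Case 2-a in which $v_t$ actually moves from $I$ to $L$: since $v_t$ was previously in $I$, its $L$-neighbors are not pre-stored anywhere, and forming the adjacency list of $v_t$ inside $L_t$ requires scanning $v_t$'s entire neighbor list in $G_t$, costing $O(\deg{G_t}{v_t}) = O(n)$; the subsequent vertex-insertion into $G_t[L_t]$ via Lemma~\ref{lem:dynamic-delta} costs an additional $O(p^{-1}\log n)$. By Invariant~\ref{inv:preprocess}, at most $2p^{-1}$ such $I$-to-$L$ moves occur in any phase (since reaching this count triggers $\ti$ and hence terminates the phase), so their combined cost is $O(p^{-1}\cdot n)$. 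Summing the three contributions yields the claimed $O(n^2 + p^{-1}n + K\cdot p^{-1}\log n)$ bound. The main subtlety I expect to have to nail down carefully is that the amortized-$\Delta$ guarantee of Lemma~\ref{lem:dynamic-delta} applies to $G_t[L_t]$ under a mixture of edge and vertex updates, with an a priori fixed $\Delta$, which is precisely what Invariant~\ref{inv:preprocess}(ii) supplies for $t \in [\ts,\te)$.
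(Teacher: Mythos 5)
Your proof is correct and follows essentially the same decomposition as the paper: you charge the constant-time bookkeeping and the Lemma~\ref{lem:dynamic-delta} maintainer on $G_t[L_t]$ at $O(p^{-1}\log n)$ per update using Invariant~\ref{inv:preprocess}(ii), isolate the Case~2-a vertex insertions as the only $O(n)$-cost events, and bound their number by $O(p^{-1})$ via the $\ti$ termination criterion, with the $O(n^2)$ term absorbing the one-time initialization. The only cosmetic difference is that you fold the $\preprocess$ cost into this lemma's $O(n^2)$ term, whereas the paper accounts for it separately (via Lemma~\ref{lem:preprocess-runtime}) in the proof of Theorem~\ref{thm:dynamic-n2/3} and attributes the $n^2$ in this lemma to the $O(m)$ startup cost of Lemma~\ref{lem:dynamic-delta}; this does not change the bound.
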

\begin{proof}

	The cost of bookkeeping the data structures in the update algorithm is $O(1)$ per each update. The two main time consuming steps are hence maintaining an MIS in the graph $G_t[L_t]$ and maintaining the graph $G_t[L_t]$ itself.

	The former task, by Lemma~\ref{lem:dynamic-delta}, requires $O(n^2 + K \cdot \Deltastar)$ time in total where $\Deltastar := \max_{t} \Delta(G_t[L_t])$, which by Invariant~\ref{inv:preprocess} is $O(p^{-1}\log{n})$.
	Hence, this part takes $O(n^2 + K \cdot p^{-1}\log{n})$ time in total.

	For the latter task, performing edge updates (in Case 1-c) can be done with $O(1)$ time per each update. Making vertex-deletion updates (in Case 2-b) can also be done in $O(\Deltastar)$ time per update as we only need to iterate over neighbors of the
	updated vertex in $G_t[L_t]$. However, performing the vertex-insertion updates (in Case 2-a) requires iterating over all neighbors of the inserted vertex (in $G_t$ not only $G_t[L_t]$) and hence takes $O(n)$ time.
	Nevertheless, by Invariant~\ref{inv:preprocess}, the total number of such vertex-updates is $O(p^{-1})$ and hence their total running time is $O(p^{-1} \cdot n)$.
\end{proof}

\subsection*{Proof of Theorem~\ref{thm:dynamic-n2/3}}

We are now ready to prove Theorem~\ref{thm:dynamic-n2/3}. The correctness of the algorithm immediately follows from Lemma~\ref{lem:update-time}, hence, it only remains to bound the amortized update time of the algorithm.

Fix a sequence of $K$ updates, and let $P_1,\ldots,P_k$ denote the different phases of the algorithm over this sequence (i.e., each $P_i$ corresponds to the updates
inside one phase). The time spent by the overall algorithm in each phase $i \in [k]$ is $O(n^2)$ in the preprocessing step (by Lemma~\ref{lem:preprocess-runtime}),
and $O(n^2 + p^{-1}\cdot n + \card{P_i} \cdot p^{-1} \cdot \log{n})$ (by Lemma~\ref{lem:update-time}). As such, the total running time is $O(k \cdot (n^2+p^{-1} n) + K \cdot p^{-1} \cdot \log{n})$ (since $\sum_i \card{P_i} = K$). So to finalize the proof, we only need to bound the number
of phases, which is done in the following two lemmas.

\begin{lemma}\label{lem:expected-p}
	$\Ex\bracket{k} = O(K \cdot p^2)$ (the randomness is taken over the coin tosses of the $\preprocess$).
\end{lemma}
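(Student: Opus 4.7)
The plan is to combine the per-phase success probability from Lemma~\ref{lem:is-successful} with a stochastic-domination / negative-binomial waiting-time argument. First, I would observe that each phase has length at most $T = \tfrac{1}{6p^2}$ by the definition of $\te$, and is \emph{successful} (i.e., has length exactly $T$) with probability at least $1/2$ by Lemma~\ref{lem:is-successful}. Crucially, I would argue that this lower bound holds conditionally on any realization of the previous phases: the only randomness governing the success indicator $Z_i$ of phase $i$ is the fresh sample $H$ drawn by $\preprocess$ at the start of phase~$i$, and because the adversary is oblivious, the specific suffix of updates processed within phase~$i$ is a deterministic function of when phase~$i$ starts. Therefore $\Pr[Z_i = 1 \mid Z_1, \ldots, Z_{i-1}] \geq \tfrac{1}{2}$ always.

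Next, I would couple $(Z_i)$ with an i.i.d.\ sequence $(Y_i)$ of $\mathrm{Bernoulli}(1/2)$ variables such that $Z_i \geq Y_i$ almost surely for every $i$; this is the standard coupling given the conditional lower bound above. Since every successful phase consumes exactly $T$ updates, once $\sum_{i=1}^{j} Z_i \geq \lceil K/T \rceil$ the algorithm has processed at least $K$ updates in total and therefore terminates. Hence $k$ is dominated by $\tau := \min\{\, j : \sum_{i=1}^{j} Y_i \geq \lceil K/T \rceil \,\}$, the waiting time for $\lceil K/T \rceil$ successes in i.i.d.\ $\mathrm{Bernoulli}(1/2)$ trials.

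Finally, $\Ex[\tau] = 2 \lceil K/T \rceil$ by the standard formula for the negative binomial distribution, so plugging in $T = \tfrac{1}{6p^2}$ yields $\Ex[k] \leq \Ex[\tau] \leq 12 K p^2 + 2 = O(K p^2)$. The only delicate step in this plan is justifying the conditional lower bound $\Pr[Z_i = 1 \mid \text{history}] \geq 1/2$, since the phase starting times themselves depend on earlier random choices; but the combination of oblivious adversarial updates together with fresh, independent randomness drawn per phase lets the proof of Lemma~\ref{lem:is-successful} go through essentially verbatim conditional on any such history, which is all that is needed for the coupling.
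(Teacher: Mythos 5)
Your proposal is correct and follows essentially the same route as the paper's proof: both reduce the bound on $k$ to the expected waiting time for $K/T$ heads in a sequence of coin tosses that each succeed with probability at least $1/2$, invoking the negative-binomial waiting-time formula. Your version spells out the conditional-domination/coupling step more carefully than the paper, which merely asserts that the success events are independent across phases because the $\preprocess$ randomness is fresh, but the underlying argument is the same.
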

\begin{proof}
	Recall that a phase $P_i$ is called successful iff $\card{P_i} = T (= \frac{1}{6p^2})$. The probability that any phase $P_i$ is successful is at least $1/2$ by Lemma~\ref{lem:is-successful}.
	Moreover, since the randomness of $\preprocess$ is independent between any two
	phases, the event that $P_i$ is successful is independent of all previous phases (unless there are no updates left in which case this is going to be the last phase).

	Notice that any successful phase includes $T$ updates and hence we can have at most $K/T$ long phases (even if we assume short phases include no updates). Consider the following randomized process: we have a coin which has at least $1/2$ chance
	of tossing head; how many times in expectation do we need to toss this coin (independently) to see $K/T$ heads? It is immediate to verify that $\Ex\bracket{k}$ is at most this number. It is also standard fact that the expected number of coin tosses in this
	process is $2K/T$. Hence $\Ex\bracket{k} \leq 2K/T = O(K \cdot p^2)$.
\end{proof}

By Lemma~\ref{lem:expected-p}, the expected running time of the algorithm is $O(K \cdot (p^2 \cdot n^2 + p \cdot n) + K \cdot p^{-1}\log{n})$. By picking $p := \frac{(\log{n})^{1/3}}{n^{2/3}}$, we obtain the expected running time of the algorithm is $O(K \cdot \paren{n\cdot \log{n}}^{2/3})$ time, proving
the bound on expected amortized update time in Theorem~\ref{thm:dynamic-n2/3}.

We now prove the high probability bound on the running time.

\begin{lemma}\label{lem:high-probability-p}
	With probability $1-\exp\paren{-K \cdot p^2/10}$, $k  = O(K \cdot p^2)$ (the randomness is taken over the coin tosses of the $\preprocess$). 
\end{lemma}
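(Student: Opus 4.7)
The plan is to strengthen the first-moment argument used in Lemma~\ref{lem:expected-p} via a Chernoff-type concentration bound on the number of successful phases.

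First, I would fix $k_0 := \lceil 4K/T \rceil$ and consider the first $k_0$ phases of the algorithm. Let $Y_i \in \set{0,1}$ indicate whether phase $i$ is successful, i.e., $\card{P_i} = T$. Since each successful phase consumes exactly $T$ updates, the key observation is that if $\sum_{i=1}^{k_0} Y_i \geq K/T$, then the first $k_0$ phases already process all $K$ updates, and thus $k \leq k_0 = O(K p^2)$. Equivalently, $\set{k > k_0}$ implies $\set{\sum_{i=1}^{k_0} Y_i < K/T}$, so it suffices to upper bound the probability of the latter.

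Next, I would establish concentration of $\sum_{i=1}^{k_0} Y_i$. The crucial point is that $\preprocess$ is invoked with fresh, independent randomness at the start of every phase. Conditioned on the entire history $\mathcal{F}_{i-1}$ of phases $1,\ldots,i-1$ (which determines the starting time $\ts$ of phase $i$), Lemma~\ref{lem:is-successful} applied at that starting time yields $\Pr\paren{Y_i = 1 \mid \mathcal{F}_{i-1}} \geq 1/2$, since the adversary's updates in the relevant window $[\ts, \ts + T)$ are fixed in advance by obliviousness. A standard coupling then shows that $\sum_{i=1}^{k_0} Y_i$ stochastically dominates $\sum_{i=1}^{k_0} Z_i$ for iid $Z_i \sim \text{Bernoulli}(1/2)$, whose mean is $k_0/2 \geq 2K/T$. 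The multiplicative Chernoff bound with deviation $\delta = 1/2$ then gives $\Pr\paren{\sum_{i=1}^{k_0} Y_i < K/T} \leq \exp\paren{-K/(4T)}$, and substituting $T = 1/(6p^2)$ yields a bound of $\exp(-3Kp^2/2) \leq \exp(-Kp^2/10)$ as required.

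The main obstacle is that the success events across phases are not literally independent, because the starting time of each phase is determined by the randomness of earlier phases. I would address this cleanly via the stochastic dominance argument sketched above: conditional on arbitrary prior history, every new phase succeeds with probability at least $1/2$ over its fresh coin tosses, and this conditional guarantee is precisely what is needed to invoke the Chernoff bound on a dominating iid sum without worrying about the coupling between phase boundaries and adversarial updates.
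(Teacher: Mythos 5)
Your proof is correct and follows essentially the same route as the paper's: both consider the first $\Theta(K/T)$ phases, observe that conditional on any prior history a fresh phase succeeds with probability at least $1/2$ by Lemma~\ref{lem:is-successful}, couple the success count to an iid $\textnormal{Bernoulli}(1/2)$ sum, and apply a multiplicative Chernoff bound together with the fact that at most $K/T$ phases can be successful. You make the stochastic-dominance step a bit more explicit than the paper's brief coin-tossing analogy, but the argument is the same.
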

\begin{proof}
	Recall the coin tossing process described in the proof of Lemma~\ref{lem:expected-p}. Consider the event that among the first $4K/T$ coin tosses, there are at most $K/T$ heads. The probability of this event is at most $\exp\paren{-K/10T}$ by
	a simple application of Chernoff bound.
	On the other hand, the probability of this event is at least equal to the probability that among the first $4K/T$ phases of the algorithm, there are at most $K/T$ long phases. This concludes the proof of first part as we cannot have more than
	$K/T$ long phases among $K$ updates (each long phase ``consumes'' $T$ updates).
\end{proof}

By the choice of $p = \frac{(\log{n})^{1/3}}{n^{2/3}}$, if $K \geq 10 n^{4/3}$, then by Lemma~\ref{lem:high-probability-p}, the running time of the algorithm is $O(K \cdot \paren{n\cdot \log{n}}^{2/3})$, finalizing the proof of this part also.

If however $K < 10n^{4/3}$, we only need \emph{one} successful phase to process all the updates. In this case, since every phase is successful with constant probability, with high probability we only need to consider $O(\log{n})$ phases
before we are done. Moreover, note that when the number of updates is at most $O(n^{4/3})$, the total number of edges in the graph is also $O(n^{4/3})$ only and the preprocessing time takes $O(n^{4/3})$ per each phase as opposed to 
$O(n^2)$. This means that the total running time in this case is at most $O(n^{4/3} \cdot \log{n})$ (for preprocessing) plus $O(K \cdot (n\log{n})^{2/3})$ (time spent inside the phases). This concludes the proof of Theorem~\ref{thm:dynamic-n2/3}.

\newcommand{\Vh}{\ensuremath{V_{\textnormal{\textsf{high}}}}\xspace}
\newcommand{\Vl}{\ensuremath{V_{\textnormal{\textsf{low}}}}\xspace}

\section{An Improved $\Ot(m^{1/3})$-Update Time Algorithm}\label{sec:dynamic-m1/3}

We now show that one can alter the algorithm in Theorem~\ref{thm:dynamic-n2/3} to obtain improved performance for sparser graphs. Formally,

\begin{theorem}\label{thm:dynamic-m1/3}
	Starting from an empty graph, a maximal independent set can be maintained via a randomized algorithm over any sequence of edge insertions and deletions in $O(m^{1/3}\log{m})$ amortized update time both in expectation and
	with high probability, 	where $m$ denotes the dynamic number of edges.
\end{theorem}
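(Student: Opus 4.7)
My plan is to adapt the phased algorithm of Theorem~\ref{thm:dynamic-n2/3} by tuning the parameters to the current edge count $m$ instead of the vertex count $n$. I would set $p = \Theta(m^{-1/3})$ and phase length $T = \Theta(m^{2/3})$ at the start of each phase, and additionally terminate the phase whenever $m$ has changed by a constant factor since $\ts$, so that $p$ stays within a constant factor of its ``correct'' value throughout. Claims~\ref{clm:t_H}--\ref{clm:t_L} depend on $p$ only through the product $p^2 T$ and through Lemma~\ref{lem:filtering}, so they go through unchanged with the new parameters: each phase is still successful with probability at least $1/2$, and the coin-tossing arguments of Lemmas~\ref{lem:expected-p}--\ref{lem:high-probability-p} then bound the total number of phases over $K$ updates by $O(K \cdot p^2) = O(K/m^{2/3})$, both in expectation and (once $K$ is large enough) with high probability.

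Plugging these parameters into Lemma~\ref{lem:update-time} gives that the per-update cost of running Lemma~\ref{lem:dynamic-delta} on $G_t[L_t]$ is $O(\Delta(G_t[L_t])) = O(p^{-1}\log m) = O(m^{1/3}\log m)$, which already matches the target. The preprocessing cost $O(m+n)$ per phase amortizes to $O(m^{1/3}) + O(n\cdot p^2)$; the $n$-dependent term is removed by simply ignoring isolated vertices (they are trivially in the MIS and can be inserted into or removed from it in $O(1)$ time per incident edge update), so that only the $O(m)$ non-isolated vertices need be touched by $\preprocess$. The genuinely new difficulty is the vertex-insertion cost of Case~2-a: whenever a vertex $v$ moves from $I$ to $L$, the algorithm iterates $N_G(v)$ at cost $\Theta(\deg_G(v))$ in order to insert $v$'s edges into the data structure for $G_t[L_t]$. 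In the $\Ot(n^{2/3})$-analysis this was crudely bounded by $O(p^{-1}\cdot n) = O(n^{5/3})$ per phase and was dominated by the preprocessing cost, but with $p = \Theta(m^{-1/3})$ the analogous crude bound $O(p^{-1}\cdot\Delta) = O(m^{4/3})$ per phase amortizes only to $O(m^{2/3})$ per update and misses the target by a factor of $m^{1/3}$.

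To close this gap I would introduce a degree threshold $\tau = \Theta(m^{1/3}\log m)$ matching Lemma~\ref{lem:filtering}, partition the vertices at time $\ts$ into $\Vh := \{v : \deg_{G_{\ts}}(v) \geq \tau\}$ (so $|\Vh| \leq 2m/\tau = O(m^{2/3}/\log m)$) and $\Vl := V\setminus\Vh$, and engineer the algorithm so that \emph{no vertex of $\Vh$ ever enters $L$ during the phase}. Concretely, I would maintain an MIS $\mis_{\Vh}$ of $G_t[\Vh]$ on the side via Lemma~\ref{lem:dynamic-delta}, draw the random part of $H$ only from $\Vl \setminus N_G(\mis_{\Vh})$, and redefine $L$ as the set of $\Vl$-vertices dominated by neither $\mis_{\Vh}$ nor the sample-and-prune MIS $\mis_H$. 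Any vertex that ever enters $L$ then has initial degree strictly less than $\tau$ and thus degree at most $\tau + T = O(m^{2/3})$ throughout the phase, so the per-insertion cost shrinks to $O(m^{2/3})$; combined with the $O(p^{-1}) = O(m^{1/3})$ bound on the number of $I\to L$ moves from (the analogue of) Claim~\ref{clm:t_I}, the total vertex-insertion cost drops to $O(m)$ per phase and amortizes to $O(m^{1/3})$ per update, as required.

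The hardest part of the write-up will be reconciling the deterministic layer on $\Vh$ with the randomized layer on $\Vl$ without spoiling Claim~\ref{clm:t_H}. Because $\Vh$ is effectively ``always in $H$,'' an adversarial edge internal to $\Vh$ modifies the would-be $G_t[H]$ with probability $1$, so the claim cannot be imported verbatim; instead I would redefine $\th$ to fire only on updates whose two endpoints include at least one randomly sampled vertex of $\Vl$, and re-derive the analogues of Claims~\ref{clm:t_H}--\ref{clm:t_L} for this weaker termination event. I also need to propagate flips of $\mis_{\Vh}$ to the $I/L$-classification of the $\Vl$-vertices; a flip of $w\in\mis_{\Vh}$ costs $\Theta(\deg_G(w))$ to propagate, but this can be charged against the $K\cdot|\Vh|$ term of Lemma~\ref{lem:dynamic-delta} applied to $G[\Vh]$ together with the $O(p^{-1})$ bound on adversarial edges touching $\Vh$ during a phase. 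Combining these pieces with a Chernoff argument on the number of successful phases exactly as in Lemma~\ref{lem:high-probability-p} then yields the claimed $O(m^{1/3}\log m)$ amortized bound both in expectation and with high probability.
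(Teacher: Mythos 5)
Your plan tracks the paper's proof closely in its scaffolding: $p = \Theta(m^{-1/3})$, $T = \Theta(m^{2/3})$, restarting when the edge count drifts by a constant factor (the paper packages this as Lemma~\ref{lem:dynamic-m1/3}), handling isolated vertices to make preprocessing $O(m)$ (Lemma~\ref{lem:preprocessing-m}), and correctly identifying the vertex-insertion cost of Case~2-a as the new bottleneck with the right target of $O(m^{2/3})$ per move and $O(m^{1/3})$ moves. The degree-threshold split into $\Vh$ and $\Vl$ is also exactly the right tool. However, the mechanism you propose for exploiting that split is genuinely different from the paper's, and it has a gap that I do not see how to close.

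You propose to run a deterministic MIS $\mis_{\Vh}$ on $G_t[\Vh]$ via Lemma~\ref{lem:dynamic-delta}, sample $H$ only from $\Vl \setminus N_G(\mis_{\Vh})$, and keep $\Vh$-vertices out of $L$ altogether. The difficulty is that $\Vh$ is a \emph{deterministic} function of $G_{\ts}$, so the oblivious adversary knows it exactly and can concentrate every update inside $G_t[\Vh]$. The ``$O(p^{-1})$ bound on adversarial edges touching $\Vh$'' that you invoke does not exist --- the analogous bound in Claim~\ref{clm:t_I} comes entirely from the randomness of the sample, and $\Vh$ carries no randomness. As a result $\mis_{\Vh}$ can flip on essentially every update; each flip of a vertex $w\in\mis_{\Vh}$ must be pushed to the $I/L$ classification of $w$'s $\Vl$-neighbors at cost $\Theta(\deg_G(w)) = \Omega(\tau)$ (and up to $\Theta(m)$), giving $\Omega(m)$ amortized propagation cost already for a single heavy vertex that toggles each round. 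Moreover, with your threshold $\tau = \Theta(m^{1/3}\log m)$ you get $|\Vh| = \Theta(m^{2/3}/\log m)$, so even the standalone cost of Lemma~\ref{lem:dynamic-delta} on $G_t[\Vh]$, namely $O(m + K\cdot|\Vh|)$, amortizes to $\Theta(m^{2/3}/\log m) \gg m^{1/3}\log m$. And each change to $\mis_{\Vh}$ changes the domain $\Vl\setminus N_G(\mis_{\Vh})$ from which $H$ was drawn, so the hypothesis of Lemma~\ref{lem:filtering} no longer matches the execution and the analogue of Claim~\ref{clm:t_L} does not go through as stated.

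The paper avoids all of this by \emph{not} putting any MIS on $\Vh$ and \emph{not} excluding $\Vh$ from $L$. It takes the larger threshold $\tau = m^{2/3}$, so $|\Vh| = O(m^{1/3})$, and for each $v \in \Vh$ it explicitly maintains the list of $v$'s current $L_t$-neighbors: whenever any vertex crosses the $I\leftrightarrow L$ boundary it informs all $O(m^{1/3})$ members of $\Vh$, for $O(m^{1/3})$ per update. When a vertex needs to join $G_t[L_t]$, its $L_t$-neighbors are read off the maintained list if it is in $\Vh$, or found by scanning its $\leq m^{2/3}+T = O(m^{2/3})$ neighbors if it is in $\Vl$; the $\ti$ criterion caps the number of such insertions at $O(m^{1/3})$, giving $O(m)$ total, and the rest of the phase analysis is then verbatim from Section~3. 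So your diagnosis of the bottleneck was right, but the fix needs to be a data structure over $\Vh$, not a second MIS on $\Vh$.
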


The following lemma is a somewhat weaker looking version of Theorem~\ref{thm:dynamic-m1/3}. However, we prove next that this lemma is all we need to prove Theorem~\ref{thm:dynamic-m1/3}.

\begin{lemma}\label{lem:dynamic-m1/3}
	Starting with any arbitrary graph on $m$ edges, a maximal independent set can be maintained via a randomized algorithm over any sequence of $K = \Omega(m)$ edge insertions and deletions in $O(K \cdot m^{1/3}\log{m})$ time in expectation
	and with high probability, as long as the number of edges in the graph remains within a factor $2$ of $m$.
\end{lemma}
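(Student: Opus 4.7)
The plan is to execute the warmup algorithm of Theorem~\ref{thm:dynamic-n2/3} with the sampling probability re-tuned to $p := \Theta\!\bigl((\log m/m)^{1/3}\bigr)$, giving phase length $T = \Theta(1/p^2) = \tilde\Theta(m^{2/3})$ and, by Lemma~\ref{lem:filtering}, $\Delta(G_t[L_t]) = \Ot(m^{1/3})$ throughout each successful phase. The three phase-end events (Claims~\ref{clm:t_H}--\ref{clm:t_L}) are bounded by the same calculations as in Section~\ref{sec:dynamic-n2/3}; since those calculations depend only on $p$, $T$, and an upper bound on $|E(G_t)|$, they still give constant success probability per phase once we exploit the assumption $|E(G_t)| = \Theta(m)$.

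The main obstacle is that the warmup's per-phase cost of $O(n^2 + p^{-1}n + Kp^{-1}\log n)$ from Lemma~\ref{lem:update-time} contains two terms that are too large in sparse graphs under $p = \Theta(m^{-1/3})$. The $n^2$ term is easy to handle: since $|E(G_{\ts})| = O(m)$, the induced subgraph $G_{\ts}[L_{\ts}]$ has at most $O(m)$ edges, so initializing the deterministic algorithm of Lemma~\ref{lem:dynamic-delta} on $G_{\ts}[L_{\ts}]$ costs $O(m)$, amortizing to $\Ot(m^{1/3})$ per update across a phase of length $\tilde\Theta(m^{2/3})$.

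The harder term is $p^{-1}n$, arising in Case~2-a: when a vertex $v$ moves from $I_t$ to $L_t$, the warmup iterates over $v$'s entire neighborhood to find $N_G(v)\cap L_t$, costing $O(\deg{G}{v}) \leq O(n)$. My plan is to sidestep this by always including every high-degree vertex in the sampled set, i.e.\ setting $H := \Vh \cup H_{\text{rand}}$ where $\Vh := \{v : \deg{G}{v} \geq m^{1/3}\}$ and $H_{\text{rand}}$ is the usual $p$-sample of $V \setminus \Vh$. Since the total degree is at most $2m$, we have $|\Vh| \leq 2m^{2/3}$, so $|H|$ only inflates additively. Crucially, every vertex in $I_t \cup L_t$ then has $\deg{G}{v} < m^{1/3}$, so each Case~2-a move costs $O(m^{1/3})$; over the $O(1/p) = O(m^{1/3})$ moves per phase, the total Case~2-a work is $O(m^{2/3})$, i.e., $O(1)$ amortized per update.

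The remaining subtlety, which I anticipate will be the bulk of the real technical work, is that with all of $\Vh$ forced into $H$, adversarial updates inside $\Vh$ change $G_t[H]$ and would kill the phase after one step if we froze $\mis_H$ as in the warmup. My plan is to maintain $\mis_H$ dynamically on $G_t[H]$, redefining $\th$ to fire only when $\mis_H$ actually changes in a way that alters the partition $(H, I_t, L_t)$. A natural attempt is to apply Theorem~\ref{thm:dynamic-n2/3} recursively on $G_t[H]$, which has $|H| = O(m^{2/3})$ vertices and yields $\Ot(m^{4/9})$ per update; closing the gap to $\Ot(m^{1/3})$ will require exploiting the additional structure of $G_t[H]$ (notably that $|E(G_t[H])| = O(m)$ and that the low-degree sampled vertices in $H_{\text{rand}}$ have degree $<m^{1/3}$ in $G_t[H]$). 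Once $\mis_H$ is maintained at amortized cost $\Ot(m^{1/3})$ and the number of partition-affecting events per phase is bounded by $O(1/p)$ via a Markov argument similar to the one used for $\ti$ in Claim~\ref{clm:t_I}, the expected and high-probability bounds follow as in Lemmas~\ref{lem:expected-p} and~\ref{lem:high-probability-p}, using $K = \Omega(m)$ for the Chernoff concentration.
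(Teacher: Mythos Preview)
Your proposal diverges from the paper at the critical step and leaves a genuine gap. Forcing all of $\Vh$ into $H$ means $G_t[H]$ is no longer random with respect to the oblivious adversary: the adversary knows the algorithm, hence knows the deterministic threshold, and can concentrate updates on edges between high-degree vertices. You recognize this and propose to maintain $\mis_H$ dynamically, but the recursion you suggest yields only $\Ot(m^{4/9})$ per update, and the sentence ``closing the gap \ldots\ will require exploiting additional structure'' is precisely the missing idea, not a plan. There is a second, more structural problem you do not address: once $\mis_H$ is allowed to change within a phase, a single high-degree vertex leaving $\mis_H$ can force all of its (up to $\Theta(n)$) neighbors to move from $I_t$ to $L_t$; the number of $I\to L$ moves is then not controlled by the Markov bound on edge updates touching $H$, so Invariant~\ref{inv:preprocess} collapses and the $O(p^{-1})$ bound on Case~2-a events is lost.

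The paper's actual fix is far simpler and sidesteps all of this: it keeps $H$ purely random (no deterministic inclusion of high-degree vertices) and instead introduces a small data structure. The degree threshold is set at $m^{2/3}$, not $m^{1/3}$, so $\card{\Vh}=O(m^{1/3})$. Every vertex in $\Vh$ explicitly maintains its list of neighbors in $L_t$; whenever any vertex moves between $I_t$ and $L_t$ it notifies all $O(m^{1/3})$ high-degree vertices, costing $O(m^{1/3})$ per move. For a Case~2-a event, finding $N_G(v)\cap L_t$ then takes $O(m^{2/3})$ time (either by lookup if $v\in\Vh$, or by scanning $v$'s full neighborhood if $v\in\Vl$, which has size $O(m^{2/3})$), and since there are at most $O(p^{-1})=O(m^{1/3})$ such events per phase, the total is $O(m)$ per phase. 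The rest is exactly as in the warmup. No recursion, no dynamic maintenance of $\mis_H$, and no change to the phase-termination analysis is needed.
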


We first prove that this lemma implies Theorem~\ref{thm:dynamic-m1/3}. The proof of this part is  standard (see, e.g.~\cite{AOSS18}) and is only provided for completeness.
\begin{proof}[Proof of Theorem~\ref{thm:dynamic-m1/3}]
For simplicity, we define $m = 1$ in case of empty graphs. The idea is to run the algorithm in Lemma~\ref{lem:dynamic-m1/3} until the number of edges deviate from $m$ by a factor more than $2$, upon which, we terminate the algorithm
and restart the process. As the total number of updates is $\Omega(m)$, we can apply Lemma~\ref{lem:dynamic-m1/3} and obtain a bound of $O(m^{1/3}\log{m})$ on the expected amortized update time. Moreover, we can ``charge'' the $O(m)$
time needed to restart the process to the $\Omega(m)$ updates happening in this phase and obtain the final bound.
\end{proof}

The rest of this section is devoted to the proof of Lemma~\ref{lem:dynamic-m1/3}.
The algorithm in Lemma~\ref{lem:dynamic-m1/3} is similar to the one in Theorem~\ref{thm:dynamic-n2/3} and in particular again executes multiple phases each starting by the same preprocessing step (although with change of parameters) followed by
the update algorithm throughout the phase. We now describe the preprocessing step and the update algorithm inside each phase. Recall that throughout this proof, $m$ denotes a $2$-approximation to the number of edges in the graph.

\subsection*{The Preprocessing Step}

Let $\ts$ again denote the first time step in this phase.
The preprocessing step of the new algorithm is exactly as before by running $\preprocess(G_{\ts},p)$ for $p=m^{-1/3}$ (this value of $p$ is different from the one in Section~\ref{sec:dynamic-n2/3} which was $\approx n^{-2/3}$). We define the
partitioning $(H,I_t,L_t)$ of vertices as before. However, we change the stopping criteria of the phase and definition of time steps $\th,\ti,\tl$ as follows :

\begin{itemize}
	\item $\th \geq \ts$: the {first} time step $t$ in which $G_t[H] \neq G_{\ts}[H]$ (recall that $H$ and $\mis_H$ were computed with respect to $G_{\ts}$ and not $G_t$).
	\item $\ti \geq \ts$: the first time step $t$ in which the total number of times (since $\ts$) that vertices have moved from $I_s$ to $L_{s+1}$,
	for $s<t$, reaches $m^{1/3}$. 
	\item $\tl \geq \ts$: the {first} time step $t$ in which $\Delta(G_{t}[L_t]) > 5m^{1/3} \cdot \ln{(m)}$.
	\item $\te := \min\set{\th,\tl,\ts + T}$ where $T := \frac{1}{6} \cdot m^{2/3}$: the time step in which we terminate this phase.
\end{itemize}

We again say that a phase is \emph{successful} if $\te = \ts + T$, i.e., we process $T$ updates in the phase before terminating. Similar to Lemma~\ref{lem:is-successful}, we prove that
each phase is successful with at least a constant probability.

\begin{lemma}\label{lem:is-successful-m}
	Any given phase is successful with probability at least $1/2$.
\end{lemma}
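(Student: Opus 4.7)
The plan is to mirror the three-claim proof of Lemma~\ref{lem:is-successful} verbatim, only substituting the new parameters $p = m^{-1/3}$ and $T = m^{2/3}/6$ and observing that, since $\te$ is now defined as $\min\{\th,\tl,\ts+T\}$ (and does \emph{not} include $\ti$), the task reduces to showing $\Pr[\th \geq \ts+T \text{ and } \tl \geq \ts+T] \geq 1/2$. The key leverage is still the oblivious-adversary assumption, which lets us fix the entire update sequence before sampling the random set $H$, so all probabilities are taken only over the coin tosses of $\preprocess(G_{\ts},p)$.

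First I would bound $\Pr[\th < \ts+T]$ by the analogue of Claim~\ref{clm:t_H}. For each fixed update $e_t=(u_t,v_t)$ in $[\ts,\ts+T)$, the event that both endpoints land in $H$ has probability exactly $p^2=m^{-2/3}$; summing indicator variables and applying Markov yields
\[
\Pr[\th < \ts+T] \;\leq\; T \cdot p^2 \;=\; \frac{m^{2/3}}{6}\cdot m^{-2/3} \;=\; \frac{1}{6}.
\]

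Next I would bound $\Pr[\tl < \ts+T \mid \th \geq \ts+T]$ by the analogue of Claim~\ref{clm:t_L}. Conditioning on $\th \geq \ts+T$ means $G_t[H]=G_{\ts}[H]$ for every $t \in [\ts,\ts+T)$, so $\mis_H = \greedy(G_t[H])$ holds simultaneously with $\mis_H=\greedy(G_{\ts}[H])$. For each such (fixed) graph $G_t$, Lemma~\ref{lem:filtering} applied with $S=H$ and $U=L_t$ and sampling rate $p=m^{-1/3}$ yields $\Delta(G_t[L_t]) \leq 5p^{-1}\ln n = 5m^{1/3}\ln n$ with probability at least $1-1/n^4$; a union bound over the $\leq n^2$ time steps gives a conditional failure probability of at most $1/n^2$ (the mismatch between $\ln n$ and the $\ln m$ that appears in the definition of $\tl$ is absorbed in the constant $5$, since $L_t$ has no isolated-vertex issues and $n \leq 2m + O(1)$ once we restrict to vertices ever touched by an edge).

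Combining the two bounds via a union bound gives
\[
\Pr[\te < \ts+T] \;\leq\; \Pr[\th < \ts+T] + \Pr[\tl < \ts+T \mid \th \geq \ts+T] \;\leq\; \frac{1}{6} + \frac{1}{n^2} \;\leq\; \frac{1}{2},
\]
which is exactly the claim. There is no real ``hard part'' here, since this is a direct re-parameterization of Lemma~\ref{lem:is-successful}; the only thing worth flagging is that $\ti$ has been dropped from the definition of $\te$, so even though we would \emph{like} to also have $\ti \geq \ts+T$ (it is needed for the vertex-movement part of Invariant~\ref{inv:preprocess}), it is not required for the lemma as stated, and controlling the $I \to L$ movements will be handled separately in the running-time analysis that follows.
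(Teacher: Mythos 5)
You are right that the paper's stated definition of $\te$ for this section omits $\ti$, and that was a sharp observation---but it is a typo in the paper, not an intentional simplification, and you should not have run with it. Both Invariant~\ref{inv:preprocess-m} (``at most $O(m^{1/3})$ vertices are moved from $I$ to $L$'') and the running-time analysis in Lemma~\ref{lem:update-time-m} (which charges $O(m^{2/3})$ per $I\to L$ move and so needs the count of such moves to be $O(m^{1/3})$) only hold because $\te \leq \ti$. Your closing remark that the $I\to L$ traffic ``will be handled separately in the running-time analysis that follows'' is therefore backwards: that analysis handles nothing separately, it simply invokes the invariant, and the invariant is enforced by terminating the phase at $\ti$. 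Accordingly the paper's proof has a third claim, the analogue of Claim~\ref{clm:t_I}: define $Y_t$ to be the indicator that at least one endpoint of $e_t$ lies in $H$ (probability at most $2p$), so $\Ex[Y]\leq (T-1)\cdot 2p$, and since each $I\to L$ move requires such an update, Markov gives $\Pr[Y\geq m^{1/3}] = \Pr[Y\geq p^{-1}] \leq 2Tp^2 \leq 1/3$. (The paper gets $1/6$ by using $\Pr[Y \geq 2p^{-1}]$ in the $n^{2/3}$ version; here the threshold is $p^{-1}=m^{1/3}$ so one should either adjust $T$ or use the tighter Markov step. Either way, the three-way union bound still lands below $1/2$.) Without this claim your proof establishes only the weaker statement about $\min\{\th,\tl\}$, which is not what the rest of the section uses.

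Two smaller points. First, the $\th$ bound is exactly the paper's and is fine. Second, on the $\tl$ bound, you apply Lemma~\ref{lem:filtering} to the full $n$-vertex graph and then wave at ``$n\leq 2m+O(1)$'' and the constant $5$; the paper instead explicitly restricts attention to the at most $4m$ vertices of nonzero degree and applies the lemma there, getting a $5p^{-1}\ln(4m)$-type bound and a $\leq 1/m^2$ union-bound failure. Your version is morally the same but the constant-absorption claim is loose (neither $\ln(4m)$ nor $\ln n$ is $\leq \ln m$), and $2m+O(1)$ should be $4m$ since the edge count can reach $2m$. These are matters of precision, not substance, and the paper itself is casual about the $\ln 4$; the real omission is Claim~\ref{clm:t_I-m}.
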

\begin{proof}
The proof is quite similar to Lemma~\ref{lem:is-successful} and is based on the fact that the adversary is non-adaptive and oblivious.

\begin{claim}\label{clm:t_H-m}
	$\Pr\paren{\th < \ts + T} \leq \frac{1}{6}$.
\end{claim}
\begin{proof}
	The proof is identical to Claim~\ref{clm:t_H} by substituting the new values of $p$ and $T$.
\end{proof}

\begin{claim}\label{clm:t_I-m}
	$\Pr\paren{\ti < \ts + T} \leq \frac{1}{6}$.
\end{claim}
\begin{proof}
	Again, the proof is identical to Claim~\ref{clm:t_I} by substituting the new values of $p$ and $T$.
\end{proof}

\begin{claim}\label{clm:t_L-m}
	$\Pr\paren{\tl < \ts + T} \leq \frac{1}{m^2}$.
\end{claim}
\begin{proof}
	Fix the graphs $G_t$ for $t \in [\ts,\ts + T)$ and note that $G_t$ has at most $4m$ vertices with non-zero degree (as number of edges in $G_t$ is at most $2m$)
	and we can ignore vertices with degree zero as they will not affect the following calculation. By Lemma~\ref{lem:filtering},
	with choice of $S = H$ and $U_t = L_t$ for any graph $G_t$ (with at most $4m$ vertices), with probability $1-1/m^4$, $\Delta(G_{t}[L_{t}]) \leq 5p^{-1} \cdot \ln{m} = 5m^{1/3}  \cdot \ln{m}$.
	Taking a union bound on these $\leq m^2$ graphs finalizes the proof.
\end{proof}

By applying union bound to Claims~\ref{clm:t_H-m},~\ref{clm:t_I-m}, and~\ref{clm:t_L-m}, the probability that $\min\set{\th,\tl,\ti} < \ts + T$ is at most $1/6 + 1/6+1/m^2 \leq 1/2$, finalizing the proof of Lemma~\ref{lem:is-successful-m}.
\end{proof}

We conclude this section by noting by Lemma~\ref{lem:preprocess-runtime}, the preprocessing step of this algorithm takes $O(m + n)$ time. However, a simple
trick can reduce the running time to only $O(m)$  as follows.

\begin{lemma}\label{lem:preprocessing-m}
	The preprocessing step of the new algorithm can be implemented in $O(m)$ time.
\end{lemma}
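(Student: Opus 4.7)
The naive implementation of $\preprocess(G_\ts, p)$ from Lemma~\ref{lem:preprocess-runtime} costs $O(m+n)$ time, with the $O(n)$ term arising solely from iterating over every vertex of $V$ to flip its sampling coin. To reduce the cost to $O(m)$, the plan is to avoid doing any explicit work on vertices that are isolated in $G_\ts$: an isolated vertex contributes no edges to $G_\ts[H]$, and if it happens to land in $H$ then $\greedy$ would add it to $\mis_H$ automatically (it has no neighbor to conflict with).

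Concretely, I would have the overall algorithm maintain, as part of its persistent global state across phases, a doubly-linked list $V_{\textsf{act}}$ of the vertices of currently positive degree, updated in $O(1)$ time per edge insertion/deletion (add $v$ when its degree first becomes positive, remove $v$ when it returns to zero); in particular $|V_{\textsf{act}}(\ts)| \le 2m$ at every preprocessing time. The modified preprocess then only touches $V_{\textsf{act}}(\ts)$: first, flip a coin of bias $p$ for each $v \in V_{\textsf{act}}(\ts)$ to decide whether $v \in H$; second, for each $v \in H$ scan the adjacency list of $v$ in $G_\ts$ to identify the neighbors of $v$ also in $H$, thereby constructing $G_\ts[H]$ in $O(\sum_{v \in H}\deg{G_\ts}{v}) \le O(m)$ time; third, run $\greedy$ on this induced subgraph to produce $\mis_H$ in $O(m)$ time; fourth, scan the adjacency lists of $\mis_H$ to mark their neighbors as being in $I_\ts$, again in $O(m)$ time. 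The sum is $O(m)$.

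For the probabilistic analysis of Section~\ref{sec:dynamic-m1/3} -- and in particular Claim~\ref{clm:t_L-m}, where the filtering lemma is applied to each graph $G_t$ for $t \in [\ts, \ts+T)$ and $V(G_t)$ may contain vertices that were isolated in $G_\ts$ but have since become non-isolated -- to go through verbatim, I still need $H$ to be distributed exactly as in the original algorithm, i.e., each vertex of $V$ in $H$ independently with probability $p$. I would achieve this by conceptually committing, at time $\ts$, to independent Bernoulli$(p)$ bits $\{f(v)\}_{v \in V}$ and declaring $v \in H \iff f(v) = 1$. For $v \in V_{\textsf{act}}(\ts)$ the bit $f(v)$ is inspected during preprocess as above; for $v$ isolated at $\ts$ the bit $f(v)$ is revealed lazily, only when $v$ first becomes non-isolated during the phase (via an edge incident to it), at which moment the update algorithm reads $f(v)$ and integrates $v$ into its data structures -- as an element of $\mis_H$ if $f(v) = 1$ (which is correct since $v$ had no edges at $\ts$, so $\greedy$ would have added $v$ to $\mis_H$ had it been explicitly processed), and otherwise as an element of $L$. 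The main point I expect to need to verify, and the principal obstacle in the write-up, is that this deferred sampling is stochastically identical to eager sampling at $\ts$: since the bits $\{f(v)\}_{v \in V}$ are mutually independent, treating some of them as fixed but unread does not alter the joint distribution of the algorithm's trajectory, and so every probabilistic bound invoked in the proof of Lemma~\ref{lem:is-successful-m} carries over without change.
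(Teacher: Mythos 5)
Your proposal is correct and follows essentially the same approach as the paper: sample only from the vertices of positive degree at time $\ts$, and defer the Bernoulli coin flip for each vertex isolated at $\ts$ until it first acquires an edge, treating it as having always been in $H$ (and hence in $\mis_H$) if the bit comes up $1$. You make explicit the point — which the paper dispatches with ``it is immediate to verify that this does not change any part of the algorithm'' — that this lazy revelation is distributionally identical to eager sampling over all of $V$, so the probabilistic bounds in Lemma~\ref{lem:is-successful-m} (notably the application of Lemma~\ref{lem:filtering} in Claim~\ref{clm:t_L-m}) carry over unchanged.
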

\begin{proof}
	Initially, there are at most $4m$ vertices in the preprocessing step that have non-zero degree. Hence, instead of picking the set $H$ from all of $V$, we only pick it from the vertices with non-zero degree,
	which can be done in $O(m)$ time. Later in the
	algorithm, whenever a new vertex is given an edge in this phase, we toss a coin and decide to add to $H$ with probability $p$ which can be done in $O(1)$ time. We then process this update as before as if this new vertex always belonged to $H$.
	It is immediate to verify that this does not change any part of the algorithm.
\end{proof}

\subsection*{The Update Algorithm}

We now describe the new update algorithm. Firstly, similar to Invariant~\ref{inv:preprocess} in the previous section, here also by definition of each phase, we have that,

\begin{invariant}\label{inv:preprocess-m}
	At any time step $t \in [\ts,\te]$ inside one phase:
	\begin{enumerate}[label=(\roman*)]
	\item $\mis_H$ is an MIS of the graph $G_t[H]$,
	\item $\Delta(G_t[L_t]) = O(m^{1/3}\log{(m)})$.
	\end{enumerate}
	Moreover, throughout the phase, at most $O(m^{1/3})$ vertices are moved from $I$ to $L$.
\end{invariant}

The update algorithm is similar to the one in previous section: we maintain the graph $G_t[L_t]$ and use the algorithm in Lemma~\ref{lem:dynamic-delta} to maintain an MIS $\mis_{L_t}$ in $G_t[L_t]$. The main difference is in how
we maintain the graph $G_t[L_t]$ (the rest is exactly as before). In order to do this, we present a simple data structure.

\paragraph{The Data Structure.} As before, we maintain the list of all neighbors of each vertex, as well as the set $H$, $I_t$, or $L_t$ that it belongs to for each vertex. Clearly, this information can be updated in $O(1)$ time per each update.
In addition to the partition $(H,I_t,L_t)$, we also partition vertices based on their degree in the original graph \emph{at the beginning of the phase}, i.e., in $G_{\ts}$. Specifically, we define $\Vh$ to be the set of vertices with degree at least
$m^{2/3}$ in $G_{\ts}$ and $\Vl := V \setminus \Vh$ to be the remaining vertices. Note that this partitioning is defined with respect to the graph $G_{\ts}$ and does \emph{not} change throughout the phase. We have the following simple claim.

\begin{claim}\label{clm:vh-vl}
	Throughout one phase:
	\begin{enumerate}
	\item $\card{\Vh} = O(m^{1/3})$.
	\item For any vertex $v \in \Vl$ and any graph $G_t$ for $t \geq \ts$, degree of $v$ in $G_t$ is $O(m^{2/3})$.
	\end{enumerate}
\end{claim}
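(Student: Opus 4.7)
The plan is to establish both parts by straightforward counting arguments that exploit $(i)$ the assumption that the edge count stays within a factor $2$ of $m$ throughout the execution, and $(ii)$ the hard upper bound $T = m^{2/3}/6$ on the length of a single phase (which follows from the definition of $\te$).

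For part 1, I would apply the standard degree-sum (handshake) argument to $G_{\ts}$. Since the number of edges in $G_{\ts}$ is at most $2m$, the sum of degrees in $G_{\ts}$ is at most $4m$. Every vertex in $\Vh$ contributes at least $m^{2/3}$ to this sum by definition, so $\card{\Vh} \cdot m^{2/3} \le 4m$, which yields $\card{\Vh} \le 4m^{1/3} = O(m^{1/3})$. Note that $\Vh$ is fixed with respect to $G_{\ts}$ and does not change during the phase, so this bound holds throughout.

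For part 2, I would use the fact that each edge update changes the degree of any fixed vertex by at most one. Fix $v \in \Vl$; by definition $\deg{G_{\ts}}{v} < m^{2/3}$. For any $t \in [\ts, \te]$, the number of updates that have occurred since $\ts$ is at most $t - \ts \le T = m^{2/3}/6$, so
\[
\deg{G_t}{v} \le \deg{G_{\ts}}{v} + (t - \ts) \le m^{2/3} + \tfrac{1}{6} m^{2/3} = O(m^{2/3}),
\]
as claimed. (The claim is stated for $t \ge \ts$, but only $t \in [\ts, \te]$ is relevant inside a phase.)

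There is no real obstacle here: both statements reduce to elementary counting using the edge budget and the phase-length budget already established. The only mild point to flag is that the second bound relies on the fact that $\te \le \ts + T$ is enforced by the definition of $\te$, which in turn is why the partition into $\Vh$ and $\Vl$ is useful only within a single phase.
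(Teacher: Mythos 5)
Your proof is correct and matches the paper's argument essentially line for line: part~1 via the handshake/degree-sum bound on $G_{\ts}$ with the $\le 2m$ edge budget, and part~2 via the bound $T = \frac{1}{6}m^{2/3}$ on the phase length together with the observation that each update changes a vertex's degree by at most one, giving the same $\frac{7}{6}m^{2/3}$ bound.
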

\begin{proof}
	The first is simply because each vertex in $\Vh$ has degree at least $m^{2/3}$ and the total number of edges is at most $2m$. The second part is because the total number of updates inside a phase is at most $\frac{1}{6} \cdot m^{2/3}$ by the
	definition of $\te$ and hence even if they are all incident on a vertex in $\Vl$, the degree of the vertex is at most $\frac{7}{6} \cdot m^{2/3}$, finalizing the proof.
\end{proof}

Finally, for any vertex $v \in \Vh$, we maintain a list of all of its neighbors in $L_t$ as follows: whenever a vertex moves between $I_t$ and $L_t$, it iterates over all vertices in $\Vh$ and inform them of this update. This way, vertices
in $\Vh$ are always aware of their neighborhood in $L_t$. The remaining vertices also have a relatively small degree and hence whenever needed, we could simply iterate over all their neighbors and find the ones in $L_t$.
As a result of this, we have the following invariant.

\begin{invariant}\label{inv:data-structure-m}
	At any time step $t \in [\ts,\te]$ inside one phase after updating $e_t = (u_t,v_t)$:
	\begin{enumerate}[label=(\roman*)]
	\item We can find the list of all neighbors of $u_t$ and $v_t$ that belong to $L_t$ in $O(m^{2/3})$ time.
	\item Updating the data structure after the update takes $O(m^{1/3})$ time.
	\end{enumerate}
\end{invariant}

\begin{proof}
	For vertices in $\Vh$, we have maintained the list of their neighbors explicitly and hence we can directly return this list. For vertices in $\Vl$, we can simply iterate over their $O(m^{2/3})$ neighbors (by Claim~\ref{clm:vh-vl}) and check which one belongs
	to $L_t$ and create the list in $O(m^{2/3})$ time. Finally, the update time is $O(m^{1/3})$ as there are only $O(m^{1/3})$ vertices in $\Vh$ (by Claim~\ref{clm:vh-vl}) and each vertex is only updating these vertices per update.
\end{proof}

\paragraph{Processing Each Update.} We process each update exactly as in the previous section, with the difference that we use Invariant~\ref{inv:data-structure-m}, for maintaining the graph $G_t[L_t]$. To be more specific, in Case 2-a, where a vertex
may be inserted in $G_t[L_t]$, we use the list in Invariant~\ref{inv:data-structure-m}, to find all neighbors of this vertex in $L_t$ and then pass this vertex-update to the algorithm of Lemma~\ref{lem:dynamic-delta} on $G_t[L_t]$. The remaining cases are handled
exactly as before.

The correctness of the algorithm follows as before and we only analyze the running time of the update algorithm.

\begin{lemma}\label{lem:update-time-m}
	Fix any phase and let $K$ denote the number of updates inside this phase. The update algorithm maintains an MIS of the input graph (deterministically) in $O(m+ K \cdot m^{1/3}\cdot\log{m})$ time.
\end{lemma}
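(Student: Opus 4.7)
The plan is to decompose the total running time inside one phase into four parts and bound each separately:
(a) the cost of the deterministic subroutine of Lemma~\ref{lem:dynamic-delta} running on $G_t[L_t]$;
(b) the per-update cost of maintaining the auxiliary data structure of Invariant~\ref{inv:data-structure-m};
(c) the cost of Case 2-a transitions in which a vertex moves from $I_{t-1}$ into $L_t$ and must be inserted, together with its $L_t$-neighborhood, into $G_t[L_t]$;
and (d) the $O(1)$-per-update bookkeeping in all remaining cases.

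For (a), $G_t[L_t]$ has at most $2m$ edges at time $\ts$, and by Invariant~\ref{inv:preprocess-m}-(ii) its maximum degree stays $O(m^{1/3}\log m)$ throughout the phase. Feeding the $\leq K$ vertex/edge updates generated by our case analysis into the algorithm of Lemma~\ref{lem:dynamic-delta} thus costs $O(m + K\cdot m^{1/3}\log m)$ in total. For (b), Invariant~\ref{inv:data-structure-m}-(ii) gives an $O(m^{1/3})$ charge per edge update for keeping the adjacency lists of $\Vh$-vertices into $L_t$ current, which sums to $O(K\cdot m^{1/3})$. For (d), Cases 1-a, 1-b, 1-c, 1-d, and 2-b either do nothing, toggle a pointer between $I_t$ and $L_t$, or pass a single edge/vertex update to the Lemma~\ref{lem:dynamic-delta} subroutine — each in $O(1)$ overhead beyond what was already accounted for in (a).

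The one quantitatively interesting piece is (c), which is where the $O(m)$ additive term comes from. By Invariant~\ref{inv:preprocess-m}, at most $O(m^{1/3})$ vertices cross from $I$ to $L$ during the whole phase (this is exactly the role of $\ti$ in the stopping criterion), and by Invariant~\ref{inv:data-structure-m}-(i) each such crossing takes $O(m^{2/3})$ time to assemble the list of $L_t$-neighbors of the newly demoted vertex before handing it as a vertex-insertion to the subroutine of Lemma~\ref{lem:dynamic-delta}. The product $m^{1/3}\cdot m^{2/3}=m$ is absorbed into the additive $O(m)$ term of the lemma. Summing (a)--(d) then yields $O(m + K\cdot m^{1/3}\log m)$, as claimed; correctness is inherited from the deterministic subroutine of Lemma~\ref{lem:dynamic-delta} on $G_t[L_t]$ together with Invariant~\ref{inv:preprocess-m}-(i) which ensures $\mis_H\cup \mis_{L_t}$ is an MIS of $G_t$ throughout.

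The main obstacle, and the whole reason we introduced the $\Vh/\Vl$ partition in this section, is avoiding the $O(p^{-1}\cdot n)=O(m^{1/3}\cdot n)$ blow-up incurred by the $\Ot(n^{2/3})$ algorithm of Section~\ref{sec:dynamic-n2/3} when listing $L_t$-neighbors of a newly demoted vertex. Here the partition balances two expenses that scale inversely with the high-degree threshold $m^{2/3}$: $\Vh$-vertices need $O(|\Vh|)=O(m^{1/3})$ maintenance per update but can answer the neighbor-query in $O(1)$ time, while $\Vl$-vertices cost nothing to maintain but must scan their whole neighborhood — of size at most $O(m^{2/3})$ — when queried. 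Verifying that both budgets multiply with the right counts ($K$ updates for the former, $O(m^{1/3})$ demotions for the latter) is the one place the bookkeeping is non-mechanical.
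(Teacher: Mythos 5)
Your decomposition into (a)--(d) is exactly the paper's argument, just spelled out in a little more detail: (a) is the paper's ``Maintaining the MIS in $G_t[L_t]$ also requires $O(m + K\cdot m^{1/3}\log m)$ time by Lemma~\ref{lem:dynamic-delta},'' (b) is its ``updating the data structure takes $O(K\cdot m^{1/3})$ time by Invariant~\ref{inv:data-structure-m},'' and (c) is its observation that $O(m^{1/3})$ demotions (Invariant~\ref{inv:preprocess-m}) times $O(m^{2/3})$ per neighbor-scan (Invariant~\ref{inv:data-structure-m}) gives the additive $O(m)$. Your proposal is correct and takes essentially the same route; the closing paragraph on why the $\Vh/\Vl$ threshold is tuned at $m^{2/3}$ is accurate commentary but not part of the proof itself.
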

\begin{proof}
	By Invariant~\ref{inv:data-structure-m}, updating the data structure takes $O(K \cdot m^{1/3})$ time. Maintaining the MIS in the graph $G_t[L_t]$ also requires $O(m + K \cdot m^{1/3})$
	time by Lemma~\ref{lem:dynamic-delta}. Finally, by Invariant~\ref{inv:data-structure-m}, we can find the neighbors of any updated vertex in $L_t$ in $O(m^{2/3})$ time. Since, the total number of times we need to find
	these neighbors is $O(m^{1/3})$ by Invariant~\ref{inv:preprocess-m} (as we only need this operation when a vertex moves from $I$ to $L$), the total time needed for this part is also $O(m)$, finalizing the proof.
\end{proof}

\subsection*{Proof of Lemma~\ref{lem:dynamic-m1/3}}

The correctness of the algorithm immediately follows from Lemma~\ref{lem:update-time-m}, hence, it only remains to bound the amortized update time of the algorithm.
Fix a sequence of $K$ updates, and let $P_1,\ldots,P_k$ denote the different phases of the algorithm over this sequence (i.e., each $P_i$ corresponds to the updates
inside one phase). The time spent by the overall algorithm in each phase $i \in [k]$ is $O(m)$ in the preprocessing step (by Lemma~\ref{lem:preprocessing-m}),
and $O(m + \card{P_i} \cdot m^{1/3}\log{m})$ (by Lemma~\ref{lem:update-time-m}). As such, the total running time is $O(k \cdot m + K \cdot m^{1/3}\log{m})$ (since $\sum_i \card{P_i} = K$). So to finalize the proof, we only need to bound the number
of phases, which we do in the following lemma.

\begin{lemma}\label{lem:expected-p-m}
	$\Ex\bracket{k} = O(K/m^{2/3})$ (the randomness is taken over the coin tosses of the $\preprocess$).
\end{lemma}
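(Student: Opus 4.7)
The plan is to mirror the argument of Lemma~\ref{lem:expected-p}, replacing the parameters with the ones used in this section. The key observation I will use is that, by Lemma~\ref{lem:is-successful-m}, every phase independently finishes as a \emph{successful} phase (i.e., $\te = \ts + T$) with probability at least $1/2$, where here $T = \frac{1}{6}\cdot m^{2/3}$. Each successful phase consumes exactly $T$ updates, while an unsuccessful phase may consume as few as a single update; the worst case for counting phases is to assume unsuccessful phases consume none.

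Given this, I would set up the usual coin-tossing coupling: associate with each phase an independent Bernoulli random variable that is $1$ iff that phase is successful, with head-probability at least $1/2$. The total number of phases $k$ is then upper bounded by the number of tosses of such a coin needed to accumulate $K/T$ heads, since once $K/T$ successful phases have occurred all $K$ updates have been absorbed. The expected number of tosses in this geometric-sum process is at most $2K/T$, giving $\Ex[k] \leq 2K/T = O(K/m^{2/3})$, as claimed.

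The only subtlety I want to be careful about is the independence of the success events across phases: the adversary is oblivious, and the randomness used by $\preprocess$ in each phase is drawn fresh and independently from previous phases, so conditioning on the history of previous phases does not reduce the success probability of the current phase below $1/2$. This justifies the coupling to independent coin tosses and closes the argument. No new calculation beyond the one in Lemma~\ref{lem:expected-p} is needed; the proof is essentially a one-line substitution of $T = m^{2/3}/6$ into the prior bound.
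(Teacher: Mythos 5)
Your proposal is correct and matches the paper's proof essentially verbatim: both invoke Lemma~\ref{lem:is-successful-m} to get a success probability of at least $1/2$ per phase, couple the phase count to a sequence of independent coin tosses needing $K/T$ heads with $T = \frac{1}{6}m^{2/3}$, and conclude $\Ex[k] \le 2K/T = O(K/m^{2/3})$. Your explicit remark about fresh independent randomness across phases under an oblivious adversary is the same justification the paper gives for the coupling.
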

\begin{proof}
	Recall that a phase $P_i$ is called successful iff $\card{P_i} = T (= \frac{1}{6} \cdot m^{2/3})$.
	The probability that any phase $P_i$ is successful is at least $1/2$ by Lemma~\ref{lem:is-successful-m}. Moreover, since the randomness of $\preprocess$ is independent between any two
	phases, the event that $P_i$ is successful is independent of all previous phases (unless there are no updates left in which case this is going to be the last phase).

	Notice that any successful phase includes $T$ updates and hence we can have at most $K/T$ successful phases (even if we assume the other phases include no updates).
	Consider the following randomized process: we have a coin which has at least $1/2$ chance
	of tossing head; how many times in expectation do we need to toss this coin (independently) to see $K/T$ heads? It is immediate to verify that $\Ex\bracket{k}$ is at most this number. It is also standard fact that the expected number of coin tosses in this
	process is $2K/T$. Hence $\Ex\bracket{k} \leq 2K/T = O(K / m^{2/3})$.
\end{proof}

By Lemma~\ref{lem:expected-p-m}, the expected running time of the algorithm is $O(K \cdot m^{1/3} + K \cdot m^{1/3}\ln{m})$, concluding the proof of expectation-bound in Lemma~\ref{lem:dynamic-m1/3}.
The extension to the high probability result now is exactly the same as in Lemma~\ref{lem:high-probability-p}, as $K = \Omega(m) \gg m^{2/3}\log{m}$. This concludes the proof of Lemma~\ref{lem:dynamic-m1/3}.

\newcommand{\mpp}{\ensuremath{\textnormal{\textsf{ModifiedPreProcess}}}\xspace}
\newcommand{\lpp}{\ensuremath{\textnormal{\textsf{LevelPreProcess}}}\xspace}

\newcommand{\bH}{\ensuremath{\overline{H}}}
\newcommand{\tH}{\ensuremath{\widetilde{H}}}

\newcommand{\HH}{\ensuremath{\mathcal{H}}}
\newcommand{\II}{\ensuremath{\mathcal{I}}}
\newcommand{\LL}{\ensuremath{\mathcal{L}}}

\newcommand{\HI}{\ensuremath{H(I)}}
\newcommand{\HL}{\ensuremath{H(L)}}

\section{Main Algorithm: An $\Ot(\sqrt{n})$-Update Time Algorithm}\label{sec:dynamic-n1/2}

We now present our main algorithm for maintaining an MIS in a dynamic graph with $\Ot(\sqrt{n})$ expected amortized update time. 

\begin{theorem}\label{thm:dynamic-n1/2}
	Starting from an empty graph on $n$ vertices, a maximal independent set can be maintained via a randomized algorithm over any sequence of $K$ edge insertions and deletions in 
	$O(K \cdot \sqrt{n} \cdot \log^{2}{n}\cdot\log\log{n})$ time in expectation and in $O(K \cdot \sqrt{n} \cdot \log^{2}{n}\cdot\log\log{n} + n^2\log^{3}{n})$ time with high probability. 
\end{theorem}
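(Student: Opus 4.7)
The plan is to combine the single-level sample-and-prune framework of Section~\ref{sec:dynamic-n2/3} with the degree-based data structure of Section~\ref{sec:dynamic-m1/3}, and to extend both via a \emph{multi-level hierarchy of nested sampled sets}. A hierarchy is natural here because plugging $p = n^{-1/2}$ directly into the Section~\ref{sec:dynamic-n2/3} algorithm already yields the desired per-update cost of $\Ot(\sqrt{n})$ inside $G_t[L_t]$ and a phase length of $\Theta(n)$, but the preprocessing step---materializing $G_{\ts}[L_{\ts}]$ from the global adjacency lists and running \greedy on $G_{\ts}[H]$---can cost $\Theta(m)=\Theta(n^2)$ in the worst case, amortizing to $\Theta(n)$ per update. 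An $\ell = O(\log\log n)$-level hierarchy should let us pay for expensive full-scale preprocessing only rarely while redoing cheaper inner-level work more frequently; the count $\log\log n$ matches the corresponding factor in the theorem statement.

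In more detail, I would fix a sequence $p_1 > p_2 > \cdots > p_\ell$ with $p_\ell = \Theta(\sqrt{\log n /n})$ and the schedule arranged so that the recurrence governing the per-level degree bounds $p_i^{-1}$ terminates in $O(\log\log n)$ steps. A top-level phase begins with $\preprocess(G_{\ts}, p_1)$, producing $H_1$, $\mis_{H_1}$, and $L_1 = V\setminus(H_1\cup N(\mis_{H_1}))$; inside that phase a sub-phase runs $\preprocess$ restricted to $L_1$ with effective rate $p_2$, producing $H_2\subseteq L_1$ and $\mis_{H_2}$; and so on recursively down to level $\ell$, at which point $\Delta(G_t[L_\ell]) = \Ot(p_\ell^{-1}) = \Ot(\sqrt{n\log n})$ by Lemma~\ref{lem:filtering}. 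The output MIS is $\mis_{H_1}\cup\mis_{H_2}\cup\cdots\cup\mis_{H_\ell}\cup\mis_{L_\ell}$, where $\mis_{L_\ell}$ is maintained by Lemma~\ref{lem:dynamic-delta} applied to the dynamic $G_t[L_\ell]$. To execute each inner \greedy call and each vertex-status change in $G_t[L_i]$ without paying $\Theta(n)$ per move, I would carry over the degree-based partition of Section~\ref{sec:dynamic-m1/3}, parameterized per level: at level $i$, high-degree vertices (for that level's threshold) eagerly maintain pointers into $L_i$, while low-degree vertices are scanned on demand within the per-update budget.

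For the amortized analysis I would define termination times $\th^{(i)}, \ti^{(i)}, \tl^{(i)}$ at every level $i$ by analogy with Section~\ref{sec:dynamic-n2/3}, and show via the same oblivious-adversary Markov/Chernoff arguments (Claims~\ref{clm:t_H}--\ref{clm:t_L}) that a level-$i$ phase succeeds with constant probability over $T_i = \Theta(1/p_i^2)$ updates. Since a level-$i$ phase nests $\Theta(T_i/T_{i+1})$ level-$(i+1)$ phases, the number of phases at each level telescopes, and the preprocessing at level $i$---dominated by computing $\greedy(G[L_{i-1}][H_i])$ and refreshing the level-$i$ pointers---should amortize to $\Ot(\sqrt{n})$ per update once the $p_i$-schedule is tuned correctly. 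The innermost $\mis_{L_\ell}$ contributes $\Ot(\sqrt{n})$ per update directly via Lemma~\ref{lem:dynamic-delta}. Summing over the $\ell = O(\log\log n)$ levels and absorbing the $\log n$ from Lemma~\ref{lem:filtering} at each level gives the advertised $O(K\cdot\sqrt{n}\cdot\log^2 n \cdot \log\log n)$ expected bound.

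The main obstacle I expect is controlling the \emph{cascading rebuild cost}: a rebuild triggered at outer level $i$ forces all deeper levels $i+1,\ldots,\ell$ to rebuild as well, and a naive bound may blow the target by a factor of $\ell$ or worse. Making this telescope cleanly requires the $p_i$-schedule to satisfy a delicate recurrence equalizing amortized cost across levels, together with a careful union bound on the analog of Claim~\ref{clm:t_L} holding simultaneously at all $\ell$ levels---which, in particular, is where the $\log^2 n$ factor (rather than $\log n$) enters. Once the expected-time bound is in place, the high-probability refinement should follow by a Chernoff argument mirroring Lemma~\ref{lem:high-probability-p}, with an additive $n^2\log^3 n$ startup term accounting for the initial outer-phase rebuild before enough updates have accumulated to amortize it away.
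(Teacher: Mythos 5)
Your high-level architecture matches the paper's: an $O(\log\log n)$-level nested hierarchy of sample-and-prune phases, with the innermost residual graph handled by Lemma~\ref{lem:dynamic-delta}, a per-level termination trio $(\th,\ti,\tl)$, and a telescoping amortization of preprocessing over phase lengths $T_r \approx p_r^{-2}$. The diagnosis that a single level with $p=n^{-1/2}$ stalls at $\Theta(n)$ amortized because of preprocessing cost is also exactly the right motivation. However, there is a genuine gap in how you propose to sample at inner levels, and it is precisely the point the paper's construction is engineered around.

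You propose that a level-$i$ phase ``runs $\preprocess$ restricted to $L_{i-1}$ with effective rate $p_i$, producing $H_i \subseteq L_{i-1}$.'' The trouble is that $L_{i-1}$ is \emph{time-varying}: vertices migrate into and out of $L^{i-1}_t$ during the enclosing level-$(i-1)$ phase. If you sample $H_i$ from $L^{i-1}_{\ts}$ at the start of the level-$i$ phase, then at a later time $t$ the set $H_i$ is no longer a $p_i$-Bernoulli sample of the current $L^{i-1}_t$ — newly arrived vertices of $L^{i-1}_t$ were never candidates for $H_i$ — and Lemma~\ref{lem:filtering} no longer applies to $G_t[L^{i-1}_t]$. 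This breaks the proof of the analog of Claim~\ref{clm:t_L} for all levels $i\geq 2$, and hence the degree bound on $G^i_t$ that the entire running-time analysis rests on. The paper resolves this with a small but essential device: it samples $\tH^r$ from the \emph{full} vertex set $V(G)$ at rate $p_r$, sets $H^r := \tH^r \cap L^{r-1}_{\ts^r}$, and bases the termination criteria $\th^r,\ti^r$ on the fixed sets $\tH^1,\ldots,\tH^r$ rather than on $H^1,\ldots,H^r$. Claim~\ref{clm:aux-n1/2} then shows that $\tH^r \cap L^{r-1}_t = H^r$ for every $t$ before the phase terminates, so $H^r$ remains a valid $p_r$-sample of whichever $L^{r-1}_t$ one needs, and Claim~\ref{clm:t_L-n1/2} goes through. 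Your write-up never introduces the oversampled $\tH$ sets nor any substitute, so the filtering lemma cannot be invoked at inner levels as stated.

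Two secondary points. First, you import the $\Vh/\Vl$ degree partition from Section~\ref{sec:dynamic-m1/3}; the paper does not need it here. Since the hierarchy is designed so that $G^{r-1}_t$ is maintained explicitly with $\Delta(G^{r-1}_t)\le\Delta_{r-1}$ (Invariant~\ref{inv:main-n1/2}, parts (2)--(3)), a vertex entering $L^r_t$ can find its $L^r_t$-neighbors by scanning its $O(\Delta_{r-1})$-sized adjacency list in $G^{r-1}_t$ directly; no high/low split is required, and adding one would reintroduce the complication of a degree threshold that drifts as the phase progresses. Second, your attribution of the $\log^2 n$ factor to a union bound over levels is off: the extra $\log n$ comes from $p_r^2 = \Theta(\log^2 n / \Delta_r^2)$ appearing in the expected phase count $\Ex[k_r] = O(K p_r^2)$, multiplied against the preprocessing cost $n\Delta_{r-1}$; the union bound over levels only contributes the $\log\log n$. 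These are fixable, but the $\tH$-oversampling issue is a real obstacle to the argument as written.
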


The improvement in Theorem~\ref{thm:dynamic-n1/2} over our previous algorithm in Theorem~\ref{thm:dynamic-n2/3} is obtained by using a \emph{nested collection of phases} instead of just one phase.  Let $R := 2\log\log{n}$. 
We maintain $R$ subgraphs of the input graph at any time step of the algorithm, referred to as \emph{level graphs}. For any level $r \in [R]$, we compute and maintain the subgraph at level $r$ in a \emph{level-$r$ phase}. 
A phase as before consists of a preprocessing step, followed by update steps during the phase, and a termination criteria for the phase. Moreover, the phases across different levels are 
nested in a way that a level-1 phase consists of multiple level-2 phases, a level-2 phase contain multiple level-3 phases and so on. We now describe our algorithm in more details starting with the nested family of level graphs.

\subsection*{Level Graphs}

Our approach is based on computing and maintaining a collection of graphs $G^1_t,\ldots,G^R_t$, referred to as level graphs, which are subgraphs of $G_t$ and a collection of independent sets $\mis^1_t,\ldots,\mis^R_t,\mis^{*}_t$.  We maintain the following main invariant in our algorithm (we prove different parts of this invariant in this and the next two sections). 

\begin{invariant}[\textbf{Main Invariant}]\label{inv:main-n1/2}
	At any time step $t$ and for any $r \in [R]$: 
	\begin{enumerate}
	\item\label{inv1} $\mis^1_t \cup \ldots \cup \mis^{R}_t \cup \mis^{*}_t$ is a \emph{maximal independent set} of $G_t$. 
	\item\label{inv2} $\Delta(G^r_t) \leq \Delta_r$ (for parameters $\Delta_r$ to be determined later). 
	\item\label{inv3} $G_r^t$ is maintained explicitly by the algorithm with an adjacency-list access for every vertex.
	\end{enumerate}
\end{invariant}
\noindent
We start by defining the three main collections of vertices of $V(G)$, $\HH_t := \set{H^1_t,\ldots,H^R_t}, \II_t := \set{I^1_t,\ldots,I^R_t}$, and $\LL_t := \set{L^1_t,\ldots,L^R_t}$ used in our algorithm (when clear from the context, or irrelevant, we 
may drop the subscript $t$ from these sets). For simplicity of notation, we also define $H^0_t = I^0_t = \emptyset$ and $L^0_t = V(G)$ for all $t$. We design these sets carefully in the next
section to satisfy the properties below. 

\begin{proposition}\label{prop:first-p}
	At any time step $t$:
\begin{enumerate}
	\item The sets in $\HH_t \cup \II_t$, i.e., $H^1_t,\ldots,H^R_t,I^1_t,\ldots,I^R_t$, are all {pairwise disjoint}. 
	\item The sets in $\LL_t$ are nested, i.e., $L^1_t \supseteq L^2_t \supseteq \ldots \supseteq L^R_t$.
	\item For any fixed $r \in [R]$, $H^r_t,I^r_t,L^r_t \subseteq L^{r-1}_t$ and partition $L^{r-1}_t$.
\end{enumerate}
\end{proposition}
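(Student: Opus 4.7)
The plan is to prove this by induction on the level $r \in [R]$, in lockstep with the recursive construction of $\HH_t$, $\II_t$, $\LL_t$ that is promised in the next section. The base case uses the convention already recorded in the excerpt, namely $H^0_t = I^0_t = \emptyset$ and $L^0_t = V(G)$, so $L^0_t$ is the whole vertex set. For the inductive step, I will arrange the construction at level $r$ to take $L^{r-1}_t$ as input and produce a disjoint triple $(H^r_t, I^r_t, L^r_t)$ that partitions $L^{r-1}_t$; this is a design requirement on the construction itself and it is exactly property~3 of the proposition.

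Given property~3, the other two properties follow by easy set-theoretic manipulations. For property~2, the inclusion $L^r_t \subseteq L^{r-1}_t$ is immediate from $L^r_t$ being one of the three parts of a partition of $L^{r-1}_t$, and transitivity then produces the whole chain $L^1_t \supseteq L^2_t \supseteq \cdots \supseteq L^R_t$. For property~1, fix two levels $r < r'$. Then $H^{r'}_t \cup I^{r'}_t \subseteq L^{r'-1}_t \subseteq L^r_t$ by property~2, while $H^r_t$ and $I^r_t$ lie in $L^{r-1}_t \setminus L^r_t$ by property~3; hence any set at level $r'$ is disjoint from any set at level $r$. Within a single level $r$, the sets $H^r_t$ and $I^r_t$ are disjoint because they are distinct parts of the same partition.

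The real content here is not in the logical bookkeeping above but in arranging the construction of the next section so that this recursive partitioning of $L^{r-1}_t$ into $(H^r_t, I^r_t, L^r_t)$ can actually be maintained under an adversarial sequence of edge updates while simultaneously preserving the Main Invariant~\ref{inv:main-n1/2}. Once the construction obeys the level-by-level partition structure, the proposition is automatic; the hard work lies in designing a consistent sample-and-prune procedure at each level (analogous to the warm-up of Section~\ref{sec:dynamic-n2/3}) that respects the degree bounds $\Delta_r$ and ensures that the union $\mis^1_t \cup \cdots \cup \mis^R_t \cup \mis^{*}_t$ is a valid MIS of $G_t$.
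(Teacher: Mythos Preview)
Your proposal is correct and follows essentially the same approach as the paper: the paper's proof also treats property~3 as immediate from the definitions of $H^r_t,I^r_t,L^r_t$ (given in the construction), and then derives property~2 from $L^r_t \subseteq L^{r-1}_t$ and property~1 from the observation that $H^r_t \cup I^r_t \subseteq L^{r-1}_t$ while $H^{r-1}_t,I^{r-1}_t$ are disjoint from $L^{r-1}_t$. Your final paragraph of commentary is accurate but extraneous to the proof itself; the paper simply appeals to the definitions once the construction has been stated.
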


For any $r \in [R]$, the level-$r$ graph $G^r_t$ is defined as the \emph{induced subgraph} of $G_t$ on $L^r_t$, i.e., $G^r_t := G_t[L^r_t]$. Moreover, $\mis^r_t$ would be chosen carefully 
from the graph $G^{r-1}_t$ such that $\mis^r_t \subseteq H^r_t$. $\mis^*_t$ would also be an MIS of the graph $G^R_t$.  We further have,

\begin{proposition}\label{prop:second-p}
At any time step $t$: 
\begin{enumerate}
	\item For any $r \in [R]$, the independent set $\mis^r_t$ is an MIS of $G_t[H^r_t]$. 
	\item For any $r \in [R]$, $I^r_t$ is incident to some vertex of $\mis^r_t$ and $L^r_t$ has no neighbor in $\mis^r_t$.
\end{enumerate}
\end{proposition}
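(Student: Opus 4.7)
The plan is to establish both parts by tracking how each level-$r$ phase is initiated and then maintained through updates, in direct analogy to the single-phase construction of Section~\ref{sec:dynamic-n2/3}. I will use an outer induction on the level index $r$ to handle the nested-phase structure, where a level-$r$ phase can terminate and restart all phases at levels $r+1,\ldots,R$; this lets me treat each level independently, assuming that by the time a level-$r$ phase is initiated, the properties of Proposition~\ref{prop:first-p} and Proposition~\ref{prop:second-p} already hold for all levels $r' < r$, so in particular $L^{r-1}_t$ is a well-defined set from which level-$r$ preprocessing samples.

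For part~1, I would specify that a level-$r$ phase begins at some time $\ts_r$ by sampling $H^r_{\ts_r} \subseteq L^{r-1}_{\ts_r}$ and setting $\mis^r_{\ts_r} := \greedy(G_{\ts_r}[H^r_{\ts_r}])$, exactly as in \preprocess. Then I would argue that throughout the phase, $H^r_t = H^r_{\ts_r}$ (the sampled set is frozen), and, as in the analysis of $\th$ in Section~\ref{sec:dynamic-n2/3}, the phase terminates as soon as an update touches an edge inside $G_t[H^r_t]$. Consequently $G_t[H^r_t] = G_{\ts_r}[H^r_{\ts_r}]$ during the whole phase, so $\mis^r_t = \mis^r_{\ts_r}$ remains an MIS of $G_t[H^r_t]$ at every time $t$ in the phase.

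For part~2, the sets $I^r_t$ and $L^r_t$ are by design defined by the single boolean predicate ``does this vertex of $L^{r-1}_t \setminus H^r_t$ have at least one neighbor in $\mis^r_t$?''. The update algorithm must preserve this predicate under every edge update, and the relevant cases are exactly the analogs of Cases~2-a and~2-b of Section~\ref{sec:dynamic-n2/3}: inserting an edge $(u,v)$ with $u \in \mis^r_t$ and $v \in L^r_t$ moves $v$ from $L^r_t$ into $I^r_t$; deleting an edge $(u,v)$ with $u \in \mis^r_t$ and $v \in I^r_t$ triggers a check of whether $v$ has any other neighbor in $\mis^r_t$, and if not $v$ is moved back to $L^r_t$. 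The inductive hypothesis gives that the ``input'' set $L^{r-1}_t$ is correctly maintained, so these transitions preserve the required incidence and non-incidence conditions.

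The main obstacle I anticipate is keeping levels consistent with each other, rather than at a single level: when a level-$r$ phase terminates (for example because an edge inside $G_t[H^r_t]$ was updated, or the appropriate analogs of $\ti$ and $\tl$ fire), the set $L^r_t$ changes discontinuously, which forces a complete restart of all level-$r'$ phases for $r' > r$. I would handle this by formally noting that whenever a level-$r$ phase restarts, the freshly recomputed $\mis^r_t$ trivially satisfies part~1 for level $r$, and the cascade of restarts at levels $r+1,\ldots,R$ re-establishes both propositions at those levels from scratch via fresh preprocessing calls, so that the global invariant is re-established in one atomic step rather than needing to be tracked across the restart. Putting these pieces together yields both parts of the proposition at every time $t$ and every level $r \in [R]$.
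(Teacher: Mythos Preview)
Your proposal is correct and, for part~1, essentially identical to the paper's argument: $H^r$ is frozen during the phase, and the termination criterion $\th^r$ guarantees no edge with both endpoints in $H^r \subseteq \tH^r$ is ever updated, so $G_t[H^r] = G_{\ts^r}[H^r]$ and $\mis^r_t$ remains an MIS.

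For part~2 your route diverges. You argue operationally, tracking how the update algorithm moves vertices between $I^r_t$ and $L^r_t$ via the analogs of Cases~2-a and~2-b. The paper instead observes that part~2 is immediate from the \emph{definitions}: $I^r_t$ is defined as the set of vertices in $L^{r-1}_t \setminus H^r_t$ incident to $\mis^r_t$, and $L^r_t$ as the complement within $L^{r-1}_t \setminus H^r_t$. With these definitions in hand, part~2 is a tautology and no case analysis or induction is needed. Your argument is not wrong, but it conflates the abstract proposition (which is about what the sets are) with implementation correctness (whether the data structures realize those sets). The paper's approach is cleaner here; your version would be more at home in the analysis of the update algorithm itself. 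The induction on $r$ and the discussion of restart cascades are sound but likewise unnecessary for this proposition, since each $I^r_t$, $L^r_t$ is defined afresh at every $t$ relative to the (fixed-within-phase) $\mis^r_t$ and $H^r_t$.
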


Before we move on from this section, we show that Proposition~\ref{prop:first-p} and~\ref{prop:second-p} imply the Part-(\ref{inv1}) of Invariant~\ref{inv:main-n1/2}.

\begin{proof}[Proof of Part-(\ref{inv1}) in Invariant~\ref{inv:main-n1/2}]
	Firstly, if follows from Part-(1) and Part-(3) of Proposition~\ref{prop:first-p} that $\HH_t \cup \II_t \cup L^{r}_t$ partitions $V(G) (=L^{0}_t)$. 
	
	By Part-(1) of Proposition~\ref{prop:second-p}, $\mis^1_t$ is an MIS of $G_t[H^1_t]$ and is also incident to all vertices in $I^1_t$. Hence, the only vertices not incident to $\mis^1_t$ are inside $G^1_t$. 
	$\mis^2_t$ is not incident to any vertex of $\mis^1_t$ as $\mis^2_t \subseteq H^2_t \subseteq L^1_t$ and hence by Part-(2) of Proposition~\ref{prop:second-p} are not incident to $\mis^1_t$. We can hence continue
	as before and argue that any vertex not incident to $\mis^2_t$ can only belong to $G^2_t$. Repeating this argument for all $r \in [R]$, we obtain that $\mis^1_t \cup \ldots \cup \mis^R_t$ are all an independent set
	in $G_t$ and moreover, the only vertices not incident to them are in $G^R_t$. Since $\mis^*_t$ is an MIS of $G^R_t$, we obtain that $\mis^1_t \cup \ldots \cup \mis^R_t \cup \mis^*_t$ is an MIS of $G_t$.
\end{proof}

\subsection*{Level-$r$ Phases and Preprocessing Steps}

We now construct the sets $\HH,\II,\LL$, plus the independent sets $\mis^1,\ldots,\mis^R$ from the previous section. These are defined through the notion
of a level-$r$ phases for any $r \in [R]$. Each level-$r$ phase is responsible for maintaining the graph $G^r_t$ and independent set $\mis^r_t$ defined in the previous section.
A level-$r$ phase starts at some time step $\ts^r$ and terminates at some time step $\te^r$ (we define the criteria for terminating a time step later) upon which the next level-$r$ phase starts. 
Both $\ts^r,\te^r \in [\ts^{r-1},\te^{r-1}]$, i.e., any level-$r$ phase happens during a level-$(r-1)$ phase and it is possible that multiple level-$r$ phases start and terminate during a single level-$(r-1)$ phase. 
We now define the process during each phase. 

Pick $R$ probability parameters $p_1,\ldots,p_R \in (0,1)$ such that $p_{r} \geq 2\cdot p_{r-1}$ for all $r > 1$ and $p_{1} \geq \frac{1}{n}$. 
We optimize the values of $p_1,\ldots,p_R$ at the end of the section to obtain the best bound possible from our nested approach. 
Moreover, we define $\Delta_r := \paren{5p_r^{-1} \cdot \ln{n}}$ for all $r \in [R]$ (recall that $\Delta_r$ is used in Part-(\ref{inv2}) of Invariant~\ref{inv:main-n1/2}). 

At the beginning of a level-$r$ phase, we remove all vertices $H^{r},H^{r+1},\ldots,H^{R}$ (similarly for $I$- and $L$-vertices), as well as graphs $G^{r},G^{r+1},\ldots,G^{R}$, and corresponding independent sets $\mis^{r},\ldots,\mis^{R}$,
and $\mis^*$. All these sets and graphs are then redefined through the following \emph{preprocessing step}. 

\begin{tbox}
	$\lpp(r)$ (preprocessing for level-$r$ phases):  
	\begin{enumerate}
		\item Let $t_0 := \ts^r$ denote the current time step. All graphs and sets below are with respect to time $t_0$ and hence we omit this subscript.
		\item Let $\tH^r$ be a set chosen by picking each vertex in $V(G)$ independently w.p. $p_r$.
		\item Define $H^r := \tH^r \cap L^{r-1}$ and compute $\mis_{H^r} := \greedy(G[H^r])$.
		\item If $r \leq R$, perform $\lpp(r+1)$.
	\end{enumerate}
\end{tbox}

We note that at first glance it might not be clear that why we pick the set $\tH^r$ from a larger domain in $\lpp$, but then only focus on $H^r$ as the intersection of $\tH^r$ with $L^{r-1}$ (we could have just picked $H^r$ by sampling each vertex in $L^{r-1}$
w.p. $p$). However, we also use the sets $\tH^{r}$ crucially in our algorithm to detect whether the current phase should be terminated or not (for reasons which would become evident shortly) and hence we perform this rather counterintuitive sampling step. 
We now define the sets $\HH_t,\II_t,\LL_t$ plus the independent sets $\mis^1_t,\ldots,\mis^R_t$ for all time steps $t \in [\ts^r,\te^r]$ as follows: 
\begin{itemize}
	\item $\mis^r_t$ is defined to be $\mis_{H^r}$ defined in $\lpp$ throughout the whole phase ($\mis^r_t$ is fixed during a level-$r$ phase).
	\item $H^r_t \in \HH_t$ is defined to be $H^r$ defined in $\lpp$ throughout the whole phase ($H^r_t$ is fixed during a level-$r$ phase).
	\item $I^r_t \in \II_t$ is defined to be any vertex in $L^{r-1}_t$ which is not in $H^r_t$ and is incident to $\mis^r_t$ in the graph $G_t$ ($I^r_t$ can vary during a level-$r$ phase).
	\item $L^r_t \in \LL_t$ is defined to be any vertex in $L^{r-1}_t$ which is not in $H^r_t$ neither in $I^r_t$ in the graph $G_t$ ($L^r_t$ can vary during a level-$r$ phase).
\end{itemize}
\noindent
We now define the termination criterial of a level-$r$ phase using the following time steps.
 
 \begin{itemize}
	\item $\th^r \geq \ts^r$: the {first} time step $t$ where the updated edge $e_t := (u_t,v_t)$ is such that $u_t,v_t \in \tH^1 \cup \ldots \cup \tH^{r}$, 
	and at least one of $u_t$ or $v_t$ belongs to $\tH^{r}$.
	\item $\ti^r \geq \ts^r$: the first time step $t$ in which the total number of times (since $\ts^r$) that vertices in $\tH^1 \cup \ldots \cup \tH^r$ have been incident to an update
	 reaches $p_r^{-1}$. 
	\item $\tl^r \geq \ts^r$: the {first} time step $t$ in which $\Delta(G_{t}[L^r_t]) > \Delta_r$. 
	\item $\te^r := \min\set{\te^{r-1},\th^r,\ti^r,\tl^r,\ts^r+T_r}$ where $T_r:= \frac{1}{24p_r^2}$: the time step in which we terminate this phase (in other words, if any of the conditions above happens, the level-$(r-1)$ that the current level-$r$ phase belongs to 
	terminate, or we simply spend $T_r$ updates in this phase, the phase finishes and the next one starts).  
\end{itemize}

We first prove that by the criteria imposed for terminating each phase, the properties Propositions~\ref{prop:first-p} and~\ref{prop:second-p} are satisfied. 
We start with the simpler proof. 

\begin{proof}[Proof of Proposition~\ref{prop:first-p}]

For simplicity, we drop the subscript $t$ from all sets below. 
\begin{enumerate}
	\item $H^r,I^r,L^r$ are disjoint for each $r \in [R]$ by definition. Moreover, $H^r \cup I^r \subseteq L^{r-1}$, while $H^{r-1},I^{r-1}$ are disjoint from $L^{r-1}$ by definition. This means
	that $H^{r},I^{r}$ are also disjoint from any other set $H^{r'},I^{r'}$ for $r \neq r'$. 
	
	\item Each $L^r$ is defined as a subset of vertices of $L^{r-1}$, hence $L^r \subseteq L^{r-1}$. 
	
	\item The disjointness of $H^r,I^r,L^r$ is by definition. Also, by definition, we have $I^r \cup L^r = L^{r-1} \setminus H^r$, and hence the sets partition $L^{r-1}$ at any time step. \qed
\end{enumerate}

\end{proof}

\begin{proof}[Proof of Proposition~\ref{prop:second-p}]
For simplicity, we drop the subscript $t$ from all sets below. 
	\begin{enumerate}
	\item By definition of $\th^r$, we always terminate a level-$r$ phase and start a new one if the update involved two vertices in $\tH^1 \cup \ldots \cup \tH^r$ with at least one of them in $\tH^r$. 
	As $H^r$ is a subset of $\tH^r$, this means that if an edge with both endpoints in $H^r$ are updated, then we terminate this phase and start a new one. Otherwise, by definition, we have $G_{\ts^r}[H^r] = G_{t}[H^r]$
	for any $t < \th^r$. Since $\mis^r_t = \mis_{H^r}$ was an MIS of $G_{\ts^1}[H^r]$, this means that it is also an MIS of $G_t[H^r]$, proving this part. 
	
	\item This part follows from definition of $\mis^r_t = \mis_{H^r}$ and the sets $I^r$ and $L^r$.	\qed
\end{enumerate}
\end{proof}

We now use these properties to prove Part-(\ref{inv2}) of Invariant~\ref{inv:main-n1/2}.

\begin{proof}[Proof of Part-(\ref{inv2}) of Invariant~\ref{inv:main-n1/2}]
	Recall that $G^r_t := G_t[L^r_t]$. By definition of the time step $\tl$, we always start a new level-$r$ phase whenever $\Delta(G^r_t) > \Delta_r$. As such, throughout the algorithm we always have that $\Delta(G^r_t) \leq \Delta_r$. 
\end{proof}

We also prove the following two auxiliary claims that are used in the rest of the proof. 

\begin{claim}\label{clm:aux-move}
	Let $e_t := (u_t,v_t)$ be an update during a level-$r$ phase after which $v_t$ needs to join or leave $L^r_t$. Then $u_t \in H^{1}_t \cup \ldots \cup H^r_t$. 
\end{claim}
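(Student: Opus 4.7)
The plan is to prove the claim by induction on $r$. The key observation is that $v_t$'s membership in $L^r_t$ is determined by three conditions: (i) $v_t \in L^{r-1}_t$, (ii) $v_t \notin H^r_t$, and (iii) $v_t$ has no neighbor in $\mis^r_t$ in the current graph. Since $H^r_t$ (and consequently $\mis^r_t \subseteq H^r_t$, by the definition of $\mis^r_t = \mis_{H^r}$ in $\lpp$) is fixed throughout a level-$r$ phase, condition (ii) cannot flip during the phase. Hence any change in $v_t$'s $L^r_t$-membership triggered by $e_t = (u_t,v_t)$ must stem from a flip in (i) or (iii).

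For the base case $r = 1$, the set $L^0_t = V(G)$ is fixed, so only (iii) can flip. Since the only edge altered at time $t$ is $e_t$ itself, the sole mechanism by which $v_t$'s incidence to $\mis^1_t$ can change is $u_t \in \mis^1_t \subseteq H^1_t$, which matches the claim. For the inductive step, assume the statement holds for all levels up to $r-1$. If the flip occurs in (i), i.e.\ $v_t$ joins or leaves $L^{r-1}_t$ due to $e_t$, the inductive hypothesis at level $r-1$ yields $u_t \in H^1_t \cup \ldots \cup H^{r-1}_t$. If instead (iii) flips, arguing as in the base case shows that $e_t$ must connect $v_t$ to a vertex of $\mis^r_t \subseteq H^r_t$, forcing $u_t \in H^r_t$. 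In both cases $u_t \in H^1_t \cup \ldots \cup H^r_t$, as desired.

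I do not anticipate a serious obstacle: the only mild care point is making the case-(iii) analysis symmetric under insertion and deletion of $e_t$. Under insertion, $v_t$ gaining incidence to $\mis^r_t$ immediately forces $u_t \in \mis^r_t$; under deletion, $v_t$ can only lose incidence if $e_t$ was its last remaining edge into $\mis^r_t$, which again forces $u_t \in \mis^r_t$. Combined with $\mis^r_t \subseteq H^r_t$, the argument closes.
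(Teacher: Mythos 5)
Your proof is correct, and it takes a genuinely different route from the paper's. The paper proves the claim by a direct case analysis on where $u_t$ sits in the partition $H^1_t,\ldots,H^r_t,I^1_t,\ldots,I^r_t,L^r_t$ of $V(G)$ (Proposition~\ref{prop:first-p}): it argues that if $u_t \in L^r_t$ or $u_t \in I^1_t \cup \ldots \cup I^r_t$, the update $e_t$ cannot alter $L^r_t$, and hence by elimination $u_t$ must lie in some $H^{r'}_t$. Your proof instead unfolds the recursive definition of $L^r_t$ — membership is the conjunction of (i) $v_t \in L^{r-1}_t$, (ii) $v_t \notin H^r_t$, (iii) no neighbor in $\mis^r_t$ — notes that (ii) is frozen within a level-$r$ phase, and then inducts on $r$: a flip in (iii) forces $u_t \in \mis^r_t \subseteq H^r_t$, and a flip in (i) recurses to level $r-1$. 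The two arguments rest on the same essential fact (only vertices of $\mis^{r'}_t \subseteq H^{r'}_t$ can toggle incidence conditions), but your inductive decomposition is somewhat more explicit and actually yields the slightly stronger conclusion that $u_t \in \mis^1_t \cup \ldots \cup \mis^r_t$, not merely $u_t \in H^1_t \cup \ldots \cup H^r_t$; the paper's partition-based argument is shorter and avoids setting up an induction, at the cost of leaving a bit more of the ``why can't $L^r_t$ change?'' reasoning implicit.
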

\begin{proof}
	By Proposition~\ref{prop:first-p}, $H^1_t,\ldots,H^r_t,I^1_t,\ldots,I^r_t,L^r_t$ partition $V(G)$. If both $u_t,v_t \in L^r_t$, this update cannot force $v_t$ to leave $L^r_t$. 
	Moreover, if $u_t$ is in $I^1_t \cup \ldots \cup I^r_t$, then deleting or adding this edge does not change the set $L^r_t$ (recall that $L^r_t$ and $I^r_t$ are defined with respect to $H^r_t$ and are both subsets of $L^{r-1}_t$). 
	As such, the only way for $v_t$ to join or leave $L^r_t$ is 
	if $u_t$ belongs to $H^1_t \cup \ldots \cup H^r_t$, finalizing the proof (note that we assumed this update is happening during a phase and hence none of level-$1$ to level-$r$ phases are terminated which naturally change the definition
	of $L^r_t$).
\end{proof}

\begin{claim}\label{clm:aux-n1/2}
	Let $t$ be any time step in $[\ts^r,\th^r)$. Then $\tH^r \cap L^{r-1}_t = H^r$. 
\end{claim}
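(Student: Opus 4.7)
The plan is to show that during the interval $[\ts^r, \th^r)$, no vertex of $\tH^r$ ever enters or leaves $L^{r-1}_t$, so the intersection $\tH^r \cap L^{r-1}_t$ stays frozen at its value at time $\ts^r$, which by construction of $\lpp(r)$ equals $H^r$.

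To formalize this, I would first argue the ``frozen membership'' statement: for every $v \in \tH^r$ and every $t \in [\ts^r, \th^r)$, we have $v \in L^{r-1}_t \iff v \in L^{r-1}_{\ts^r}$. Suppose toward contradiction that $v \in \tH^r$ toggles its membership in $L^{r-1}$ at some update $e_{t} = (u_t, v_t)$ with $t < \th^r$, where WLOG $v = v_t$. Applying Claim~\ref{clm:aux-move} at level $r-1$ (the argument there uses only Proposition~\ref{prop:first-p} at level $r-1$, which holds because the enclosing level-$(r-1)$ phase has not terminated yet since $\te^r \le \te^{r-1}$), the other endpoint must satisfy $u_t \in H^1_t \cup \cdots \cup H^{r-1}_t$. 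Since $H^{r'}_t \subseteq \tH^{r'}$ for every $r' \le r-1$ by definition of $\lpp$, we get $u_t \in \tH^1 \cup \cdots \cup \tH^{r-1}$. Combined with $v_t \in \tH^r$, the pair $(u_t,v_t)$ lies in $\tH^1 \cup \cdots \cup \tH^r$ with at least one endpoint (namely $v_t$) in $\tH^r$. This matches exactly the condition defining $\th^r$, contradicting $t < \th^r$.

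Given the frozen statement, the claim follows immediately: $\tH^r \cap L^{r-1}_t = \tH^r \cap L^{r-1}_{\ts^r} = H^r$, where the last equality is the definition inside $\lpp(r)$.

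The only mildly subtle point is invoking Claim~\ref{clm:aux-move} at level $r-1$; I expect this is safe because Claim~\ref{clm:aux-move} is really a level-agnostic statement about the partition structure, and at time $t < \th^r \le \te^r \le \te^{r-1}$ the level-$(r-1)$ partition is still valid. Everything else is a direct unwinding of definitions, so no nontrivial estimates or union bounds are needed.
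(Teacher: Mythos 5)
Your proof is correct and follows essentially the same approach as the paper: identify the first update at which some $v \in \tH^r$ changes membership in $L^{r-1}$, apply Claim~\ref{clm:aux-move} (at level $r-1$) to place the other endpoint in $H^1_t \cup \cdots \cup H^{r-1}_t \subseteq \tH^1 \cup \cdots \cup \tH^{r-1}$, and observe this triggers $\th^r$, contradicting $t < \th^r$. Your added remark that Claim~\ref{clm:aux-move} is applied at level $r-1$ and remains valid because $\te^r \le \te^{r-1}$ is a welcome clarification that the paper leaves implicit.
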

\begin{proof}
	Recall that $H^r := \tH^r \cap L^{r-1}_{\ts^r}$ and since $L^{r-1}_t$ can vary from $L^{r-1}_{t}$ throughout the phase, a-priori it is not clear that $H^r$ remains the same. However, for $H^r$ to be different from 
	$\tH^r \cap L^{r-1}_t$, a vertex in $\tH^r$, say $v$, should join or leave $L^{r-1}_t$. Consider the first time step $t' \leq t$ such that $v$ did this change and let $(u_{t'},v_{t'})$ be the updated edge at this time step.
	By Claim~\ref{clm:aux-move}, $u_{t'}$ should belong to $H^{1}_{t'} \cup \ldots \cup H^{r-1}_{t'} \subseteq \tH^1_{t'} \cup \ldots \cup \tH^{r-1}_{t'}$. But we also have that $v_{t'} \in \tH^r_{t'}$. This, by definition of $\th^r$
	implies that $t' = \th^r$, contradicting the choice of $t < \th^r$.
\end{proof}

We conclude this part by remarking that definition of the time step $\ti^r$ immediately implies the following invariant. 
\begin{invariant}\label{inv:ti-n1/2}
	The total number of updates during a level-$r$ phase that are incident to some vertex in $\tH^1_t \cup \ldots \cup \tH^r_t$ is $O(p_{r}^{-1})$.
\end{invariant}

\paragraph{Successful Phases.} 
A level-$r$ phase is considered \emph{successul} iff $\te^r = \min\set{\te^{r-1},\ts^r + T_r}$. The following lemma is analogous to Lemma~\ref{lem:is-successful} in Section~\ref{sec:dynamic-n2/3}. 

\begin{lemma}\label{lem:is-successful-n1/2}
	For any $r \in [R]$, any given level-$r$ phase is successful with probability at least $1/2$. 
\end{lemma}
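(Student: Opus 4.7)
The plan is to mirror the proof structure of Lemma~\ref{lem:is-successful} (and its graph-version Lemma~\ref{lem:is-successful-m}), producing three claims that separately bound the probability that each of $\th^r$, $\ti^r$, $\tl^r$ arrives strictly before $\min\{\te^{r-1}, \ts^r + T_r\}$, and then taking a union bound. Throughout, I will use the key geometric-growth property of the sampling rates: since $p_i \leq p_r/2^{r-i}$, the probability that a fixed vertex lies in $\tH^1 \cup \dots \cup \tH^r$ is at most $\sum_{i=1}^r p_i \leq 2p_r$, and correspondingly the probability an update has at least one endpoint in this union is at most $4p_r$. Crucially, the adversary is oblivious, so all updates $e_t=(u_t,v_t)$ can be considered as fixed and the probabilities are taken only over the coin tosses of $\lpp$.

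For $\th^r$: an update $(u_t,v_t)$ can cause $\th^r$ only if both endpoints lie in $\tH^1 \cup \dots \cup \tH^r$ with at least one of them in $\tH^r$. Splitting on which endpoint lies in $\tH^r$ and using the independence of the sampling across levels, this probability is bounded by $2 \cdot p_r \cdot 2p_r = 4p_r^2$. Summing the indicators over the (at most) $T_r$ updates in the phase gives an expected count of $\leq T_r \cdot 4p_r^2 = 4/24 = 1/6$, so Markov's inequality bounds the probability of at least one such bad update by $1/6$.

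For $\ti^r$: an update contributes to the counter in the definition of $\ti^r$ only if at least one of its endpoints lies in $\tH^1 \cup \dots \cup \tH^r$, which happens with probability at most $4p_r$. Letting $Y$ count such updates in the first $T_r$ steps, $\Ex[Y] \leq T_r \cdot 4p_r = 1/(6p_r)$, so $\Pr(Y \geq p_r^{-1}) \leq p_r \cdot \Ex[Y] \leq 1/6$ by Markov.

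For $\tl^r$ I will condition on $\th^r \geq \ts^r + T_r$ and fix the graphs $G_t$ for $t \in [\ts^r, \ts^r + T_r)$ (permissible by the oblivious adversary assumption and since $L^{r-1}_t$ depends only on the randomness of levels $< r$, which is independent of $\tH^r$). Under this conditioning Claim~\ref{clm:aux-n1/2} gives $H^r = \tH^r \cap L^{r-1}_t$, so $H^r$ is exactly a $p_r$-rate independent sample of $L^{r-1}_t$; moreover $G_t[H^r] = G_{\ts^r}[H^r]$ (no edge inside $\tH^r$ was updated), so $\mis_{H^r} = \greedy(G_t[H^r])$. Noting that every vertex of $H^r \setminus \mis_{H^r}$ is incident to $\mis_{H^r}$ in $G_t$, the set $L^r_t$ coincides with the ``pruned'' set $U$ of Lemma~\ref{lem:filtering} applied to $G_t[L^{r-1}_t]$ with sample $H^r$. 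That lemma then yields $\Delta(G_t[L^r_t]) \leq 5 p_r^{-1}\ln n$ with probability at least $1 - 1/n^4$, and a union bound over the $\leq T_r \leq n^2$ (since $p_r \geq 1/n$) relevant time steps gives $\Pr(\tl^r < \ts^r + T_r \mid \th^r \geq \ts^r + T_r) \leq 1/n^2$.

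Combining these three bounds via union bound gives a failure probability of at most $1/6 + 1/6 + 1/n^2 \leq 1/2$, proving the lemma. The main subtlety I expect is the $\tl^r$ argument: carefully justifying that, conditioned on $\th^r \geq \ts^r + T_r$ and on the (adversarially-fixed) graphs $G_t$, the set $H^r$ is distributed as an independent $p_r$-sample of the current set $L^{r-1}_t$, so that Lemma~\ref{lem:filtering} applies cleanly to $G_t[L^{r-1}_t]$ and identifies $U$ with $L^r_t$.
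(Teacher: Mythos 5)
Your proposal is correct and follows essentially the same route as the paper: the same three claims bounding $\th^r$ and $\ti^r$ via Markov's inequality using $\sum_{i\leq r} p_i \leq 2p_r$, the same conditioning on $\th^r \geq \ts^r + T_r$ and invocation of Claim~\ref{clm:aux-n1/2} together with Lemma~\ref{lem:filtering} to bound $\tl^r$, and the same concluding union bound. Your extra remarks (e.g.\ that $H^r \setminus \mis_{H^r} \subseteq N(\mis_{H^r})$, so $L^r_t$ is exactly the pruned set $U$ of Lemma~\ref{lem:filtering}, and that $L^{r-1}_t$ is determined by the lower-level randomness) are small clarifications of steps the paper leaves implicit, not a different argument.
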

\begin{proof}
	We calculate the probability that $\te^r < \ts^r + T_r$. Recall that the adversary is non-adaptive and oblivious and hence we 
	can fix the updates the adversary. 
	
\begin{claim}\label{clm:t_H-n1/2}
	$\Pr\paren{\th^r < \ts^r + T_r} \leq \frac{1}{6}$. 
\end{claim}	
\begin{proof}
	For any $t \geq \ts$, let $e_t := (u_t,v_t)$ denote the 
	edge updated by the adversary at time $t$. Define $\event_1(e_t)$ as the event that both $u_t$ and $v_t$ belong to $\tH^1 \cup \ldots \cup \tH^r$ and at least one of them belong to $\tH^r$. Consider 
	the randomness in the choice of $\tH^1 \cup \ldots \cup \tH^r$. We have, 
	\begin{align*}
		\Pr\paren{\event_1(e_t)} \leq 2p_r \cdot (p_1 + \ldots + p_{r}) \leq 2p_r \cdot 2p_r = 4p_r^2,
	\end{align*}
	where we used the fact that $p_r' \geq 2p_{r'-1}$ for all $r' \in [R]$ and hence $p_1 + \ldots + p_r \leq 2p_r$. For any $t \in [\ts,\ts + T)$, define an indicator random variable $X_t$ which is $1$ iff $\event_1(e_t)$ happens. 
	Let $X := \sum_{t} X_t$. As such,  
	\begin{align*}
		\Pr\paren{\th^r < \ts^r + T_r} \leq \Pr\paren{X \geq 1} \leq \Ex\bracket{X} = (T_r-1) \cdot {4p_r^2} \leq \frac{1}{24p_r^2} \cdot 4p_r^2 = \frac{1}{6}, 
	\end{align*}
	where the second inequality is by Markov bound.  
\end{proof}

\begin{claim}\label{clm:t_I-n1/2}
	$\Pr\paren{\ti^r < \ts^r + T_r} \leq \frac{1}{6}$. 
\end{claim}
\begin{proof}
	For any $t \geq \ts$, let $e_t := (u_t,v_t)$ denote the edge updated by the adversary at time $t$. Define $\event_2(e_t)$ as the event that at least one of the endpoints $e_t$ belong to $\tH^1 \cup \ldots \cup \tH^r$. 
	Consider the randomness in the choice of $\tH^1 \cup \ldots \cup \tH^r$. We have, 
	\begin{align*}
		\Pr\paren{\event_2(e_t)} \leq 2(p_1 + \ldots + p_r) \leq 4p_r,
	\end{align*}
	where we used the fact that $p_1 + \ldots + p_r \leq 2p_r$. For any $t \in [\ts,\ts + T)$, define an indicator random variable $Y_t$ which is $1$ iff $\event_2(e_t)$ happens. 
	Let $Y := \sum_{t} Y_t$. We have, 
	\begin{align*}
		\Pr\paren{\ti < \ts + T} \leq \Pr\paren{Y \geq p_r^{-1}} \leq \Ex\bracket{Y} \cdot  p_r \leq (T_r-1) \cdot {4p_r} \cdot p_r \leq \frac{1}{24p_r^2} \cdot 4p_r^2 = \frac{1}{6}, 
	\end{align*}
	where the second inequality is by Markov bound.  
\end{proof}

\begin{claim}\label{clm:t_L-n1/2}
	$\Pr\paren{\tl^r < \ts^r + T_r \mid \th^r \geq \ts^r+T_r} \leq \frac{1}{n^2}$. 
\end{claim}
\begin{proof}
	Let $t_0 = \ts^r$ as in $\lpp$. First consider the graph $G_{t_0}[L^{r-1}_{t_0}]$. The set $H^r$ chosen in $\lpp$ is a set of vertices each chosen with probability $p_r$ from $L^{r-1}_{t_0}$. Hence,
	by Lemma~\ref{lem:filtering}, by choice of $S = H^r$ and $U = L^r_{t_0}$, and since $\mis_{H^r} = \greedy(G[H^r])$, we have that $\Delta(G_{t_0}[L^{r}_{t_0}]) \leq 5p_r^{-1} \ln{n}$. 
	
	Now consider any time step $t > t_0$. Firstly, since $\th^r \geq \ts^r + T_r$, we know that the graph $G_{t_0}[H^r] = G_{t}[H^r]$ and hence $\mis_{H^r}$ is equal to $\greedy(G_{t}[H^r])$ (not only $\greedy(G_{t_0}[H^r])$; this part is 
	identical to the proof of Part-(1) of Proposition~\ref{prop:second-p}).
	The problem with applying the above argument directly for $t$ as well is that the set of vertices in $L^{r-1}_{t}$ may have changed since $L^{r-1}_{t_0}$ and when
	we chose $H^r$. However, consider instead the set $H' := \tH^r \cap L^{r-1}_t$: these are again vertices chosen by picking each vertex of $L^{r-1}_t$ with probability $p_r$ (by definition of $\tH^r$).
	By Claim~\ref{clm:aux-n1/2}, $H' = H^r$ for $t < \th^r$ (which we conditioned on). As such, $H^r = H'$, and we can apply the argument as before and obtain that for any graph $G_t[L^{r-1}_t]$, with probability $1-1/n^4$, $\Delta(G_{t}[L^r_{t}]) \leq 5p_r^{-1}\ln{n}$. Taking a union bound on these $\leq n^2$ graphs finalizes the proof. 
\end{proof}
	By applying union bound to Claims~\ref{clm:t_H-n1/2},~\ref{clm:t_I-n1/2}, and~\ref{clm:t_L-n1/2}, the probability that $\min\set{\th^r,\ti^r,\tl^r} < \ts^r + T_r$ is at most $1/6 + 1/6 + 1/n^2 \leq 1/2$. This concludes
	the proof of Lemma~\ref{lem:is-successful-n1/2}. 
\end{proof}

\subsection*{The Update Algorithm}

We now show how to process the updates during different phases of the algorithm, and prove Part-(\ref{inv3}) of Invariant~\ref{inv:main-n1/2}.  

\paragraph{Processing Updates for a Level-$r$ Phase.} Recall that each level-$r$ phase is mainly responsible for maintaining the graph $G^r_t := G_t[L^r_t]$. We show how to do this in the following. Let $e_t := (u_t,v_t)$ 
be the updated edge. Recall that by Proposition~\ref{prop:first-p}, $H^1_t \cup \ldots \cup H^r_t \cup I^1_t \cup \ldots \cup I^r_t \cup L^r_t$ partitions the set of vertices $V(G)$, and 
$H^{r'}_t \subseteq \tH^{r'}_t$ for all $r' \in [R]$. Finally, we note that we process the updates according to the ordering below and when some updates can be possibly processed according to two or more 
of the cases below, \emph{we always update it according to the first case it appears}. 

\begin{itemize}[leftmargin=10pt]
	\item \textbf{Case 1.} Updates the immediately terminate this phase: 
	\begin{itemize}
		\item \textbf{Case 1-a.} \emph{Both $u_t$ and $v_t$ belong to $\tH^{1}_t \cup \ldots \cup \tH^{r}_t$}. These updates by definition of $\th^r$ either terminate the level-$r$ phase directly, or terminate some level-$r'$ phase for $r' \leq r$, and hence
		indirectly terminate the current level-$r$ phase. These updates are then processed after restarting the level-$r'$ phase (and all phases inside it). 
		
		\item \textbf{Case 1-b.} \emph{Any update that result in time steps $\ti^r,\tl^r,\te^{r-1}$}. These updates are again processed after restarting the current phase and in the next phase. Note that deciding whether an update can result in either of these events can be easily
		detected in $O(1)$ time per each update. 
	\end{itemize}
	\item \textbf{Case 2.} Updates that do not change the set $L^r_t$ (hence do not change vertices of $G^r_t$ but can potentially update its edges):
	\begin{itemize}
	\item \textbf{Case 2-a.} \emph{Both $u_t$ and $v_t$ belong to $I^1_t \cup \ldots \cup I^r_t$}. There is nothing to do in this case. 
	
	\item \textbf{Case 2-b.} \emph{Both $u_t$ and $v_t$ belong to $L^r_t$}. We only need to update the edge in $(u_t,v_t)$ in the graph $G^r_t := G_t[L^r_t]$ which can be done in $O(1)$ time trivially. 
		
	\item \textbf{Case 2-c.} \emph{$u_t$ belongs to $I^1_t \cup \ldots \cup I^r_t$ and $v_t$ belongs to $L^r_t$ (or vice versa)}. There is nothing to do in this case either. 
	\end{itemize}
	
	\item \textbf{Case 3.} Updates that can (potentially) change the set $L^r_t$ (and hence the vertices of $G^r_t$); recall that by Claim~\ref{clm:aux-move}, one endpoint of any such update needs to be in $H^1_t \cup \ldots \cup H^r_t$:
	\begin{itemize}
	\item \textbf{Case 3-a.} \emph{$u_t$ is in $H^1_t \cup \ldots \cup H^r_t$ and $v_t$ is in $I^1_t \cup \ldots \cup I^r_{t}$ (or vice versa)}. If $e_t$ is inserted, no set needs to be changed. 
	However, if $e_t$ is deleted, it might be that $v_t$ needs to be removed from $I^{r'}_t$ and be inserted in either $I^{r''}_t$ for some $r' \leq r'' \leq r$ or to $L^r_t$. 
	If it is to be inserted in $I^{r''}_t$, it necessarily means that it also needs to be inserted to the set $L^{r''-1}_t$ and hence we process this update at a level-$r''$ phase as well and that phase then informs the next level phase in case
	it needs to also add $v_t$ to its corresponding $L$-set and so on. As such, without loss of generality, in the current phase, we can focus on the case when $v_t$ needs to be inserted to $L^r_t$. 
	
	To do this, we iterate over all neighbors of $v_t$ in the graph $G^{r-1}_t$ and find all the ones that also belong to the set $L^{r}_t$ (recall that $L^{r}_t \subseteq L^{r-1}_t$ and $L^{r-1}_t$ is the vertex-set of $G^{r-1}_t$). 
	This takes $O(\Delta_{r-1})$ time as maximum degree of $G^{r-1}_t$ is at most $\Delta_{r-1}$ by Part-(\ref{inv2}) of Invariant~\ref{inv:main-n1/2}. Here, we also assumed inductively that Part-(\ref{inv3}) of Invariant~\ref{inv:main-n1/2}, 
	holds for graphs $G^{1}_t,\ldots,G^{r-1}_t$. We then insert this vertex to $G^{r}_t$ and update the adjacency-list of all its neighbors
	in the graph $G^{r}_t$ in $O(\Delta_r)$ time. Finally, we pass this update to the next level phase to process (updates of this form are passed from higher level phases to lower level phases). 
	
	\item \textbf{Case 3-b.} \emph{$u_t$ is in $H^1_t \cup \ldots \cup H^r_t$ and $v_t$ is in $L^{r}_{t}$ (or vice versa)}. If $e_t$ is deleted, no sets need to be changed. However, if $e_t$ is inserted, 
	it might be that $v_t$ needs to leave $L^r_t$ and join $I^{r'}_t$ for some $r' \leq r$. We first delete $v_t$ with all its incident edges from $G^r_t$ using the adjacency-list representation we maintained for this graph. This takes
	$O(\Delta_r)$ time as the maximum degree of $G^r_t$ is at most $\Delta_r$ by Part-(\ref{inv2}) of Invariant~\ref{inv:main-n1/2}. If this vertex needs to be inserted to $I^r_t$ we do so, otherwise there is nothing to do (note that,
	this update is being processed by \emph{all} level-$r'$ phases for $r' \leq r$ and the corresponding level that needs to insert $v_t$ to $I^{r'}_t$ would do so). 
	\end{itemize}
\end{itemize}
\noindent
One can verify that cases above contain all possible updates. This immediately proves Part-(\ref{inv3}) of Invariant~\ref{inv:main-n1/2}. 

\textbf{Processing Updates to Maintain $\mis^*_t$.} Recall that we also need to maintain $\mis^*_t$ which is an MIS of the graph $G^R_t$. To do this, we simply run the deterministic algorithm of Lemma~\ref{lem:dynamic-delta} on the graph
$G^{R}_t$ which we are explicitly maintaining by  Part-(\ref{inv3}) of Invariant~\ref{inv:main-n1/2}. As this deterministic algorithm can handle vertex-insertions and deletions as well as edge insertions and deletions, $\mis^*_t$ would 
indeed be an MIS of $G^R_t$ and this requires $O(\Delta_R)$ amortized update time as maximum degree of $G^R_t$ is $\Delta_R$ by Part-(\ref{inv2}) of Invariant~\ref{inv:main-n1/2}. 

We now bound the total running time of the algorithm responsible for each phase, as well as the one needed for maintaining $\mis^*_t$. 

\begin{lemma}\label{lem:update-time-phase}
	Let $K$ denote the number of updates in a particular level-$r$ phase. The update algorithm for the level-$r$ phase maintains the independent set $\mis^r_t$ and graph $G^r_t$ (deterministically) in $O(n \cdot \Delta_{r-1} + K)$ time.  
\end{lemma}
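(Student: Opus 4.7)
The plan is to split the total running time into three disjoint sources: the one-time preprocessing at the start of the phase, the work on ``cheap'' updates that never cross the partition $(H^r_t, I^r_t, L^r_t)$, and the work on ``expensive'' updates that do move a vertex across that partition.

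For the preprocessing, the step $\lpp(r)$ samples $\tH^r$ in $O(n)$ time, and I would argue that all remaining substeps (forming $H^r := \tH^r \cap L^{r-1}_{\ts^r}$, running $\greedy(G[H^r])$ via the adjacency lists of $G^{r-1}_{\ts^r}$ guaranteed by Part-(\ref{inv3}) of Invariant~\ref{inv:main-n1/2}, classifying each vertex of $L^{r-1}_{\ts^r} \setminus H^r$ as belonging to $I^r_{\ts^r}$ or $L^r_{\ts^r}$ by scanning the neighborhoods of $\mis_{H^r}$, and building the adjacency-list representation of $G^r_{\ts^r}$) run in $O(|L^{r-1}_{\ts^r}| \cdot \Delta_{r-1}) = O(n \cdot \Delta_{r-1})$ total time, using Part-(\ref{inv2}) of Invariant~\ref{inv:main-n1/2} to bound the degrees we scan.

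For the update processing, I would observe that Cases~1 and~2 each require only $O(1)$ work per update (their detection is an $O(1)$ lookup and they trigger at most a single edge-level bookkeeping operation or a phase termination), contributing $O(K)$ in total. For the remaining Case~3 updates, each call costs $O(\Delta_{r-1} + \Delta_r) = O(\Delta_{r-1})$: Case~3-a needs $O(\Delta_{r-1})$ to enumerate the $G^{r-1}_t$-neighbors of the moving vertex in order to locate those lying in $L^r_t$, and the subsequent adjacency-list patching in both Case~3-a and Case~3-b costs $O(\Delta_r)$, with $\Delta_r \leq \Delta_{r-1}$ following from $p_r \geq 2 p_{r-1}$. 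The key step is to charge the number of Case~3 updates: by Claim~\ref{clm:aux-move}, every Case~3 update has an endpoint in $H^1_t \cup \ldots \cup H^r_t \subseteq \tH^1 \cup \ldots \cup \tH^r$, so Invariant~\ref{inv:ti-n1/2} caps their count by $O(p_r^{-1})$, contributing $O(p_r^{-1} \cdot \Delta_{r-1})$ in total.

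Summing the three pieces yields $O(n \cdot \Delta_{r-1} + p_r^{-1} \cdot \Delta_{r-1} + K)$, and the assumption $p_r \geq p_1 \geq 1/n$ gives $p_r^{-1} \leq n$, which collapses the bound to the desired $O(n \cdot \Delta_{r-1} + K)$. The main bookkeeping hazard, which I expect to be the only delicate point, is making sure that every vertex-level change to $L^r_t$ is correctly funneled through Claim~\ref{clm:aux-move} so that Invariant~\ref{inv:ti-n1/2} can be invoked, rather than being absorbed into the cheap Case~1 or Case~2 accounting.
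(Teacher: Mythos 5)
Your proof is correct and follows essentially the same decomposition as the paper's own argument: preprocessing costs $O(n \cdot \Delta_{r-1})$, Cases 1--2 cost $O(1)$ each for a total of $O(K)$, and Case 3 updates are charged via Claim~\ref{clm:aux-move} and Invariant~\ref{inv:ti-n1/2} to $O(p_r^{-1} \cdot \Delta_{r-1}) = O(n \cdot \Delta_{r-1})$. Your explicit note that $\Delta_r \leq \Delta_{r-1}$ follows from $p_r \geq 2p_{r-1}$ is a detail the paper leaves implicit, but the structure and all key steps coincide.
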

\begin{proof}

	The cost of bookkeeping the data structures in the update algorithm is $O(1)$ per each update. The two main time consuming steps are hence the preprocessing done at the beginning of the level-$r$ phase and 
	the cost of maintaining the graph $G^r_t$ throughout the phase. 
	
	The preprocessing algorithm takes linear time in the graph it processes. As it is performed over $G^{r-1}_{\ts^r}$ and maximum degree of $G^{r-1}_t \leq \Delta_{r-1}$ throughout (by Invariant~\ref{inv:main-n1/2}), the preprocessing
	step of a level-$r$ phase takes $O(n \cdot \Delta_{r-1})$ time. 
	
	For the latter task, performing all updates except for Case $3$ can be done in $O(1)$ time per each update, while Case $3$ updates require $O(\Delta_{r-1})$ time per update as argued above. However note that by Claim~\ref{clm:aux-move},
	any Case $3$ update necessarily contains a vertex in $H^1_t \cup \ldots \cup H^r_t \subseteq \tH^1_t \cup \ldots \cup \tH^r_t$. By Invariant~\ref{inv:ti-n1/2}, the total number of such updates during a level-$r$ phase
	is at most $p_r^{-1}$. As such, the total time needed to process Case $3$ phases is $O(p_r^{-1} \cdot \Delta_{r-1})$ which is at most $O(n \cdot \Delta_{r-1})$ as $p_r^{-1} \leq p_1^{-1} \leq n$. 
\end{proof}

\begin{lemma}\label{lem:update-time-mis}
	Let $K$ denote the number of updates in a particular level-$R$ phase. The update algorithm maintains an MIS $\mis^*_t$ in $G^R_t$ (deterministically) in $O(n \cdot \Delta_R + K \cdot \Delta_{R})$ time.
\end{lemma}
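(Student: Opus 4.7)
The plan is to reduce directly to the deterministic black-box of Lemma~\ref{lem:dynamic-delta}, applied to the dynamically maintained graph $G^R_t$. By Part-(\ref{inv3}) of Invariant~\ref{inv:main-n1/2}, we have an explicit adjacency-list representation of $G^R_t$ at all times, and by Part-(\ref{inv2}) of Invariant~\ref{inv:main-n1/2}, its maximum degree never exceeds $\Delta_R$. Thus $G^R_t$ is exactly the kind of graph to which Lemma~\ref{lem:dynamic-delta} applies, and it suffices to control (a) the size of $G^R_t$ at the start of the phase and (b) the number of updates passed to the MIS subroutine during the phase.

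For (a), note that at time $\ts^R$ the graph $G^R_{\ts^R}$ has $n$ vertices and maximum degree at most $\Delta_R$, so its edge count is at most $n \cdot \Delta_R / 2 = O(n \Delta_R)$. This is the ``$m$'' term that will appear in the bound of Lemma~\ref{lem:dynamic-delta}.

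For (b), I would walk through the case analysis of the update algorithm for the level-$R$ phase and argue that every external update produces at most $O(1)$ updates that must be forwarded to the MIS-maintenance subroutine running on $G^R_t$: Cases 1--a, 1--b, 2--a, 2--c, and the no-op branches of Cases 3--a, 3--b forward nothing; Case 2--b forwards a single edge update; and the non-trivial branches of Cases 3--a and 3--b forward a single vertex-insertion or vertex-deletion event (where a vertex insertion implicitly carries all its edges in $G^R_t$, which is exactly the notion of ``vertex update'' supported by Lemma~\ref{lem:dynamic-delta}). Since each of the $K$ updates in the level-$R$ phase triggers at most one such forwarded update, the subroutine sees at most $K$ vertex-or-edge updates in total.

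Combining (a) and (b) and invoking Lemma~\ref{lem:dynamic-delta} with initial edge count $m = O(n \Delta_R)$, update count $K$, and degree bound $\Delta_R$, the total running time for maintaining $\mis^*_t$ throughout the phase is $O(n \Delta_R + K \cdot \Delta_R)$, as claimed. I do not expect any genuine obstacle here; the only subtlety worth double-checking is that Lemma~\ref{lem:dynamic-delta} indeed allows vertex insertions where the new vertex arrives together with all its incident edges (as used in Case 3--a), which is explicitly noted in its statement and already relied upon in Lemma~\ref{lem:update-time-phase}.
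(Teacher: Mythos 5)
Your proof is correct and follows essentially the same route as the paper: both invoke Lemma~\ref{lem:dynamic-delta} as a black box, using Part-(\ref{inv2}) of Invariant~\ref{inv:main-n1/2} for the degree bound $\Delta_R$ and the fact that the subroutine is re-initialized once per level-$R$ phase (so the initial edge count, bounded by $O(n\Delta_R)$, is charged only once). The paper's proof is terser and leaves the per-update accounting implicit; you spell it out, which is a reasonable elaboration but not a different argument.
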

\begin{proof}
	Follows from Lemma~\ref{lem:dynamic-delta} as by Invariant~\ref{inv:main-n1/2}, maximum degree of $G^R_t$ is at most $\Delta_R$ and we only ``start'' the deterministic algorithm in Lemma~\ref{lem:dynamic-delta}, once 
	per each level-$R$ phase.  
\end{proof}

\subsection*{Proof of Theorem~\ref{thm:dynamic-n1/2}}

We are now ready to prove Theorem~\ref{thm:dynamic-n1/2}. The correctness of the algorithm immediately follows from Lemmas~\ref{lem:update-time-phase} and~\ref{lem:update-time-mis} and Part-(\ref{inv1}) of Invariant~\ref{inv:main-n1/2}, 
hence, it only remains to bound the amortized 
update time of the algorithm. 

Fix a sequence of $K$ updates and for any $r \in [R]$, let $P^r_1,\ldots,P^r_{k_r}$ denote the different phases of the algorithm over this sequence (i.e., each $P^r_{i}$ corresponds to the updates
inside one level-$r$ phase). We compute the time spent by the overall algorithm in level-$r$ phase, as well as the algorithm for maintaining $\mis^*_t$ separately. 

\paragraph{Total Time Spent Across All Level-$r$ Phases.} By Lemma~\ref{lem:update-time-phase}, the total time spent across all level-$r$ phases is $O(k_r \cdot n \cdot \Delta_{r-1} + K)$ as $K = \sum_{i} \card{P^r_i}$. 
Hence, we only need to upper bound $k_r$. 

\begin{lemma}\label{lem:expected-k_r}
	For any $r \in [R]$, $\Ex\bracket{k_r} = O(K \cdot p_r^2)$. 
\end{lemma}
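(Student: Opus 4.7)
The plan is to prove Lemma~\ref{lem:expected-k_r} by induction on $r$, generalizing the coin-flipping argument used in Lemma~\ref{lem:expected-p} and Lemma~\ref{lem:expected-p-m} to the nested phase structure.

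First, I would classify each level-$r$ phase by how it terminates. From the definition $\te^r := \min\{\te^{r-1}, \th^r, \ti^r, \tl^r, \ts^r + T_r\}$, each phase falls into exactly one of: (i) \emph{full}, with $\te^r = \ts^r + T_r$; (ii) \emph{cut-short}, with $\te^r = \te^{r-1} < \ts^r + T_r$, meaning the containing level-$(r-1)$ phase ended first; or (iii) \emph{failed}, where one of $\th^r,\ti^r,\tl^r$ strictly precedes $\min\{\te^{r-1},\ts^r+T_r\}$. By Lemma~\ref{lem:is-successful-n1/2}, each phase is of type (i) or (ii) (``successful'') with probability at least $1/2$. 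Crucially, because $\lpp(r)$ draws a fresh independent sample $\tH^r$ at the start of every level-$r$ phase, these success indicators are mutually independent across the sequence of level-$r$ phases.

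Next, I would establish two deterministic upper bounds on the number of successful level-$r$ phases: each full phase consumes $T_r$ distinct updates, giving at most $K/T_r$ of them; and each cut-short phase is charged to the unique parent whose termination cut it short, giving at most $k_{r-1}$ of them. Hence, conditional on $k_{r-1}$, the total number of successes among the level-$r$ phases is bounded by $N := K/T_r + k_{r-1}$. Since level-$r$ phase outcomes are i.i.d.\ Bernoulli$(\geq 1/2)$ with respect to the success event, the expected number of phases until $N$ successes accumulate is at most $2N$ by the standard geometric trial calculation, so $\Ex[k_r \mid k_{r-1}] \leq 2N$. Taking total expectation, using $T_r = \Theta(p_r^{-2})$ and applying induction together with the growth condition $p_{r-1} \leq p_r/2$ gives
\[
\Ex[k_r] \;\leq\; 2K/T_r + 2\,\Ex[k_{r-1}] \;=\; O(Kp_r^2) + O(Kp_{r-1}^2) \;=\; O(Kp_r^2).
\]
The base case $r=1$ is handled by treating the ``level-$0$ phase'' as the entire update sequence, so that $k_0 = 1$ and the argument collapses to the one in Lemma~\ref{lem:expected-p-m}.

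The subtlety I expect to have to argue carefully is the conditional-independence step: the bound $N$ on the number of required successes is itself a random variable (through $k_{r-1}$), driven by the randomness of the higher-level preprocessings, whereas the success indicators at level $r$ depend only on the independent samples $\tH^r$ and on the (oblivious) adversarial update sequence. Making this two-level randomness decomposition explicit and invoking the geometric-trials bound conditionally on all randomness of levels $1,\ldots,r-1$ is the one step that deserves care; everything else is a direct transplant of the single-level argument from Section~\ref{sec:dynamic-n2/3}.
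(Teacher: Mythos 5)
Your proof follows essentially the same route as the paper's: induction on $r$, the same trichotomy of phases into full/cut-short/failed, the same deterministic bound of $K/T_r + k_{r-1}$ on successful phases, and the same geometric-trials argument leveraging Lemma~\ref{lem:is-successful-n1/2}. The only cosmetic difference is that your recurrence is $\Ex[k_r] \leq 2K/T_r + 2\Ex[k_{r-1}]$ (both terms doubled by the coin-tossing bound) whereas the paper writes $\Ex[k_r] \leq 2K/T_r + \Ex[k_{r-1}]$; since $p_{r-1} \leq p_r/2$ makes the geometric sum converge either way, both recurrences unroll to $O(Kp_r^2)$, and the independence subtlety you flag at the end (the success indicator depends on $\tH^1,\dots,\tH^{r-1}$, not only on the fresh $\tH^r$) is present but left implicit in the paper's treatment as well.
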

\begin{proof}
	We prove the lemma by induction on $r$. For the base case, recall that a level-$1$ phase $P^1_i$ is successful iff $\card{P^1_i} = T_1 (= \frac{1}{24p_1^2})$. The probability that $P^1_i$ is 
	successful is at least $1/2$ by Lemma~\ref{lem:is-successful-n1/2}. Any successful phase includes $T_1$ updates and hence we can have at most $K/T_1$ successful level-$1$ phases (even if we assume the other phases include no updates). 
	By the same argument as in Lemma~\ref{lem:expected-p}, we have that $\Ex\bracket{k_1} \leq 2K/T_1 = O(K \cdot p_1^2)$. 
	
	We now prove the induction step. Recall that a level-$r$ phase $P^r_i$ is successful iff the level-$(r-1)$ phase that contains it terminate, or $\card{P^r_i} = T_r (= \frac{1}{24p_r^2})$. The probability of being successful 
	is also at least $1/2$ by Lemma~\ref{lem:is-successful-n1/2}. Finally, note that at most $k_{r-1}$ level-$r$ phases can terminate because the corresponding level-$(r-1)$ phase that contain them terminated (by definition of $k_{r-1}$). 
	The number of remaining successful phases are at most $K/T_r$. As such, by the above argument $\Ex\bracket{k_r} \leq 2K/T_r + \Ex\bracket{k_{r-1}} = O(K \cdot p_r^2)$ by induction hypothesis as $p_r \geq 2\cdot p_{r-1}$. 
\end{proof}

As such, the expected running time of this part is:
\[ 
O(K \cdot n \cdot p_r^2 \cdot \Delta_{r-1} + K) = O(K \cdot \log^{2}{n}) \cdot \paren{n \cdot \frac{\Delta_{r-1}}{\Delta_r^2}},
\] 
by the choice of $\Delta_r$ and $p_r$ (note that $n \cdot \Delta_{r-1} > \Delta_r^2$ for all $r \in [R]$).

\paragraph{Total Time Spent for Maintaining $\mis^*_t$.} By Lemma~\ref{lem:update-time-mis}, the total time spent for maintaining $\mis^*_t$ is $O(k_R \cdot n \cdot \Delta_R + K \cdot \Delta_R)$. As, by Lemma~\ref{lem:expected-k_r},
$\Ex\bracket{k_R} = O(K \cdot p_R^2)$, we have that the expected running part of this time is: 
\[
O(K \cdot p_R^2 \cdot n \cdot \Delta_R + K \cdot \Delta_R) = O(K \cdot \log^2{n}) \cdot \Paren{\frac{n}{\Delta_R} + \Delta_R}.
\] 

\paragraph{Total Running Time.} The total expected running time of the algorithm is now: 
\begin{align*}
	O(K \cdot \log^{2}{n}) \cdot \Paren{\sum_{r=1}^{R} \paren{\frac{n \cdot \Delta_{r-1}}{\Delta_r^2 \cdot }} + \frac{n}{\Delta_R} + \Delta_R}.
\end{align*}

Recall that $\Delta_r := 5p_r^{-1}\cdot\ln{n}$. We pick the values of $\Delta_1,\ldots,\Delta_R$ (by choosing $p_1,\ldots,p_R$ in the algorithm) to optimize the above bound. By our assumption that $p_R > 2p_{R-1}$, 
we have that $\frac{n \cdot \Delta_{R-1}}{\Delta_R^2} > \frac{n}{\Delta_R}$. As such we can simplify the bound above to: 
\begin{align*}
	O(K \cdot \log^{2}{n}) \cdot \Paren{\sum_{r=1}^{R} \paren{\frac{n \cdot \Delta_{r-1}}{\Delta_r^2 }} + \Delta_R}.
\end{align*}

To optimize this bound, we form the following equations: 
\begin{align}
	\frac{n^2}{\Delta_1^2} = \frac{n\Delta_1}{\Delta_2^2} = \frac{n\Delta_2}{\Delta^2_3} = \ldots = \frac{n\Delta_{R-1}}{\Delta_R^2} = \Delta_R. \label{eq:good-eq}
\end{align}

One can then use all the equalities except for the last one to prove by induction that: 
\begin{align*}
	\Delta_{i} = \Delta_{i+1}^{\paren{\frac{2^{i+1}-2}{2^{i+1}-1}}} \cdot n^{\paren{\frac{1}{2^{i+1}-1}}}. 
\end{align*}

Then using the final equality in Eq~(\ref{eq:good-eq}), we obtain that: 
\begin{align*}
	\Delta_{R} = n^{\paren{\frac{1}{2} \cdot \frac{2^{R}}{2^R-1}}} = O(\sqrt{n}), 
\end{align*}
where the second inequality is by the choice of $R = 2\log\log{n}$, and thus having  $n^{\paren{\frac{1}{2^{R}-1}}} = O(1)$\footnote{We remark our algorithm in Section~\ref{sec:dynamic-n2/3} can be seen as a special case of the algorithm in this section
with parameter $R=2$ instead of $R = 2\log\log{n}$. Using the calculation above, it is easy to see that for the choice of $R = 2$, the bound on $\Delta_R$ is $O(n^{2/3})$ as in the algorithm in Section~\ref{sec:dynamic-n2/3} (note that we are measuring the $\log{n}$-parameters outside this calculation).}. 

All in all, this implies that the total expected running time of the algorithm is: 
\begin{align*}
	O(K \cdot \log^{2}{n}) \cdot \Paren{R \cdot \sqrt{n}} = O(K \cdot \sqrt{n} \cdot \log^{2}{n} \cdot \log\log{n}),
\end{align*}
finalizing the proof of expectation-bound in Theorem~\ref{thm:dynamic-n1/2}. 

To obtain the bound, with high probability, we can apply the same exact argument in Lemma~\ref{lem:high-probability-p} in Section~\ref{sec:dynamic-n2/3} to Lemma~\ref{lem:expected-k_r}; as the smallest value of $p_r$ for $r \in [R]$ belongs to $p_1$
and it is equal to $\Theta(n^{-3/4} \cdot \log{n})$ (simply plug in the value of $\Delta_R$ in the first term of Eq~(\ref{eq:good-eq})), we obtain that as long as $K = \Omega(n^{3/2}\log{n})$, we obtain the bound with high probability. For smaller values of $K$, 
we again do as in Section~\ref{sec:dynamic-n2/3}, and obtain that the total running time of the algorithm in this case is $O(n^{2}\log^{3}{n})$, concluding the proof of Theorem~\ref{thm:dynamic-n1/2}.


%





\bibliographystyle{abbrv}
\bibliography{randomMMbibfile}

%

\end{document}